\newtheorem{theorem}{Theorem}[section]
\newtheorem{lemma}[theorem]{Lemma}
\newtheorem{corollary}[theorem]{Corollary}
\newtheorem{proposition}[theorem]{Proposition}
\DeclareMathOperator {\gp} {gp}
\DeclareMathOperator{\ti}{ti} 
\DeclareMathOperator{\I}{I} 
\DeclareMathOperator{\conv}{conv}
\DeclareMathOperator {\gc} {g}
\DeclareMathOperator {\pc} {p3}
\DeclareMathOperator {\psc} {p3*}
\DeclareMathOperator {\mc} {m}
\DeclareMathOperator{\diss}{diss} 
\DeclareMathOperator{\vc}{vc} 
\DeclareMathOperator{\nd}{nd} 
\DeclareMathOperator{\edg}{edg}
\newcommand{\NN}{\mathbb{N}}
\title{The iteration time and the general position number in graph convexities}
\author[1]{Julio Araujo}
\author[2]{Mitre C. Dourado}
\author[3]{Fábio Protti}
\author[1]{Rudini Sampaio\thanks{Email: \texttt{julio@mat.ufc.br, mitre@ic.ufrj.br, fabio@ic.uff.br, rudini@dc.ufc.br}}}
\affil[1]{Universidade Federal do Cear\'a, Fortaleza, Brazil}
\affil[2]{Inst. Computação, Universidade Federal do Rio de Janeiro, Brazil}
\affil[3]{Inst. Computação, Universidade Federal Fluminense, Niterói, Brazil}
\begin{document}
\maketitle
\begin{abstract}
In this paper, we study two graph convexity parameters: iteration time and general position number. The iteration time was defined in 1981 in the geodesic convexity, but its computational complexity was so far open. The general position number was defined in the geodesic convexity and proved $\NP$-hard in 2018. We extend these parameters to any graph convexity and prove that the iteration number is $\NP$-hard in the $P_3$ convexity. We use this result to prove that the iteration time is also $\NP$-hard in the geodesic convexity even in graphs with diameter two, a long standing open question. These results are also important since they are the last two missing $\NP$-hardness results regarding the ten most studied graph convexity parameters in the geodesic and $P_3$ convexities. We also prove that the general position number of the monophonic convexity is $W[1]$-hard (parameterized by the size of the solution) and $n^{1-\varepsilon}$-inapproximable in polynomial time for any $\varepsilon>0$ unless $\P=\NP$, even in graphs with diameter two. Finally, we also obtain FPT results on the general position number in the $P_3$ convexity and we prove that it is $W[1]$-hard (parameterized by the size of the solution).
\end{abstract}

\noindent {\bf Keywords:} 
graph convexity, general position number, iteration time.

\section{Introduction}\label{sec:intro}

Convexity is a classical topic, studied in many different branches of mathematics.
A rich source is the book ``Theory of Convex Structures'' by van de Vel~\cite{vandevel93}.
The study of convexities applied to graphs has started recently, about 50 years ago.
Accordingly to Duchet \cite{duchet87}, the first paper on general graphs, published in english, is the 1981 paper ``Convexity in graphs'' from Frank Harary and Juhani Nieminem \cite{harary81}, where it was introduced the iteration time of a graph in the geodesic convexity.
Other important graph convexity parameter is the general position number, which is related to the 1917 Dudeney's No-three-in-line problem \cite{dudeney-1917}. The explicit definition of it to the graph geodesic convexity was done by Manuel and Klavžar \cite{manuel-2018a}.

To the best of our knowledge, these parameters were defined and studied only in the geodesic convexity.
Our first contribution in this paper is the definition of them to any graph convexity. We also prove computational complexity results on these parameters in three important graph convexities (geodesic, monophonic and $P_3$), including the NP-hardness of the geodesic iteration number, a long standing open question. In order to prove this, we also had to prove the NP-hardness of the $P_3$ iteration number.
For this, we need some terminology.

A \emph{convexity} $\mathcal{C}$ \cite{vandevel93} on a finite set $V\ne\emptyset$ is a family of subsets of $V$ such that $\emptyset,V\in \mathcal{C}$ and $\mathcal{C}$ is closed under intersections. That is, $S_1,S_2\in\mathcal{C}$ implies $S_1\cap S_2\in\mathcal{C}$. A member of $\mathcal{C}$ is said to be a $\mathcal{C}$-\emph{convex set}. Given $S\subseteq V$, the $\mathcal{C}$-\emph{convex hull} of $S$ is the smallest $\mathcal{C}$-convex set $\conv_\mathcal{C}(S)$ containing $S$. 
It is easy to see that $\conv_\mathcal{C}(\cdot)$ is a \emph{closure operator}, that is, for every $S,S'\subseteq V$:
\begin{enumerate}[(a)]
    \item $S\subseteq\conv_\mathcal{C}(S)$ (extensivity),
    \item $S\subseteq S' \Rightarrow \conv_\mathcal{C}(S)\subseteq\conv_\mathcal{C}(S')$ (monotonicity),
    \item $\conv_\mathcal{C}(\emptyset) = \emptyset$ (normalization\footnote{Here we follow the definition of closure operator from van de Vel \cite{vandevel93}, which includes the normalization property.}) and
    \item $\conv_\mathcal{C}(\conv_\mathcal{C}(S)) = \conv_\mathcal{C}(S)$ (idempotence).
\end{enumerate}

We say that $\I:2^V\to 2^V$ is an \emph{interval function} on $V$ if, for every $S,S'\subseteq V$,
(a) $S\subseteq\I(S)$ [extensivity],
(b) $S\subseteq S'\ \Rightarrow\ \I(S)\subseteq\I(S')$ [monotonicity] and
(c) $\I(\emptyset)\ =\ \emptyset$ [normalization].
It is not difficult to prove that every interval function induces a unique convexity, containing each set $S\subseteq V$ such that $\I(S)=S$. 
Moreover, every convexity is induced by an interval function.
In the rest of the paper, we assume that every convexity $\mathcal{C}$ on $V$ is defined by an explicitly given interval function $\I_\mathcal{C}(\cdot)$ on $V$.
%
%
It is well known that the convex hull of a set $S$ in a convexity $\mathcal{C}$ can be obtained by exhaustively applying the corresponding interval function $\I_\mathcal{C}(\cdot)$ until obtaining a convex set.

Given a finite graph $G$, a \emph{graph convexity} on $G$ is simply a convexity $\mathcal{C}$ on $V(G)$ with a given interval function $\I_\mathcal{C}(\cdot)$ on $V(G)$. A standard way to define a \emph{graph convexity} $\mathcal{C}$ on a graph $G$ is by fixing a family $\mathcal{P}$ of paths of $G$ and 
taking the interval function $\I_\mathcal{C}(S)$ as the set with all vertices lying on some path of $\mathcal{P}$ whose endpoints are in $S$.
The most studied graph convexities are path convexities, such as the \emph{geodesic convexity}~\cite{everett85,faja,harary81}, the \emph{monophonic convexity}~\cite{duchet88,jamison82}, the $P_3$ \emph{convexity}~\cite{centeno09} and the $P_3^*$ \emph{convexity} \cite{araujo13-lagos}, where $\mathcal{P}$ is, respectively, the family of all geodesics (shortest paths) of the graph, of all induced paths, of all paths of order three and of all induced paths of order three.

We use the subscripts $\gc$, $\mc$, $\pc$  and $\psc$ to refer to the the geodesic convexity, the monophonic convexity, the $P_3$ convexity and the $P_3^*$ convexity, respectively.
In the following, when we say that some numerical parameter is \NP-hard, we mean that the determination of its value is an \NP-hard problem.


From these definitions, we can define the two parameters investigated in this paper for any graph convexity.
The \emph{iteration time of a set} $S\subseteq V(G)$ in a graph convexity $\mathcal{C}$, denoted by $\ti_\mathcal{C}(S)$, is the minimum $k$ such that $\I^k_\mathcal{C}(S)=\conv_\mathcal{C}(S)$, that is, $k$ applications of the interval function are necessary in order to obtain the convex hull of $S$. 
Also let the \emph{iteration time of a graph} $G$ in a graph convexity $\mathcal{C}$, denoted by $\ti_\mathcal{C}(G)$, be the maximum value of $\ti_\mathcal{C}(S)$ among the subsets $S$ of $V(G)$.
The iteration time is one of the first graph convexity parameters, introduced in 1981 by Harary and Nieminem \cite{harary81} in the geodesic convexity.
In \cite{parvathy99}, bounds were obtained for the geodesic iteration time.
In 2016, Dourado et al. \cite{tempo-it16} obtained a polynomial time algorithm to determine the geodesic iteration time on distance hereditary graphs, which was improved in 2020 by Moscarini \cite{moscarini20}.
However, the computational complexity of the iteration time was so far open for any graph convexity, including the geodesic convexity.

The \emph{general position number} of a graph $G$ in a graph convexity $\mathcal{C}$, denoted by $\gp_{\mathcal{C}}(G)$, is the size of a maximum subset of $V(G)$ in general position, where we say that a subset $S\subseteq V(G)$ is in \emph{general position} if $z\not\in\I_\mathcal{C}(\{x,y\})$ for every distinct $x,y,z\in S$.
In the $P_3$ convexity, curiously any subset in general position induces a subgraph with maximum degree 1 and then $\gp_{\pc}(G)$ is equivalent to the \emph{dissociation number} $\diss(G)$ of the graph $G$, a parameter introduced in 1981 by Yannakakis \cite{yannakakis81}, who also proved $\NP$-hardness even for bipartite graphs and planar graphs with maximum degree four.
In the $P_3^*$ convexity, any subset in general position induces a subgraph whose connected components are cliques and, then, $\gp_{\psc}(G)$ is equivalent to the \emph{IUC number} (\emph{independent union of cliques}) of the graph $G$, a parameter introduced in 2020 by Ertem et al. \cite{IUC20} and proved $\NP$-hard even in planar graphs.

The geodesic general position number is a generalization of the No-three-in-line problem in the $n\times n$ grid from discrete geometry, which can be traced to the famous Dudeney's ``Puzzle with Pawns'' of his book ``Amusements in Mathematics'' \cite{dudeney-1917} from 1917.
All the following results are for the geodesic convexity.
In 1995, Korner~\cite{korner-1995} investigated the general position number on hypercubes, while in~\cite{ullas-2016} it was considered for the first time on general graphs. However, the formalization of the problem as we know it today and the notation that is in use have been introduced in~\cite{manuel-2018a, manuel-2018b}. Also see~\cite{froese-2017, ku-2018, misiak-2016, payne13} for the related general position subset selection problem in computational geometry.
In 2018, it was proved that determining the general position number is \NP-hard \cite{manuel-2018a}.
In 2019, general position sets in graphs were characterized \cite{bijo-2019} and, after this, several additional papers on the general position problem were published, many of them with bounds on the maximum size of a general position set and exact values in graph products \cite{thomas-2020, klavzar-2021d, klavzar-2021a, klavzar-2021b, klavzar-2019, patkos-2020,  tian-2021a,tian-2021b}.

In this paper, we prove that the iteration time is \NP-hard in the $P_3$ convexity. This result is the basis to prove that the iteration time is also \NP-hard in the geodesic convexity even in graphs with diameter two. Surprisingly, even though this problem has been defined in 1981, its computational complexity had not been settled yet.

In addition to these points, we consider these results even more important due to the fact that they are the last two missing proofs of \NP-hardness among the ten main graph convexity parameters in the geodesic and $P_3$ convexities.
In order to justify this, notice that the following nine graph convexity parameters are \NP-hard in the geodesic and $P_3$ convexities: hull number and convexity number \cite{Araujoetal2013,araujo13-lagos,Douradoetal2012}, interval number \cite{centeno09,harary93}, Carathéodory number \cite{barbosa-2012,erika15}, Radon number \cite{araujo18,radon13}, Helly number \cite{Carvalho2016,mitre-aline17}, general position number \cite{manuel-2018a,yannakakis81}, rank \cite{rank17,rank14} and percolation time \cite{araujo18,marcilon18}.

Regarding the monophonic convexity, we prove that the monophonic general position number is $W[1]$-hard (parameterized by the size of the solution) and $n^{1-\varepsilon}$-inapproximable in polynomial time for any $\varepsilon>0$ unless $\P=\NP$, even in graphs with diameter two.
Finally, we also prove that the general position number in the $P_3$ convexity is $W[1]$-hard (parameterized by the size $k$ of the solution) and we obtain fixed parameter tractable results regarding the neighborhood diversity, the vertex convex number and the cliquewidth of the graph. 

\section{Preliminary results}

Given a positive integer $n$, let $[n]=\{1,\ldots,n\}$. 
Given a graph $G$ and $S\subseteq V(G)$, we say that a vertex $v$ has \emph{iteration time $k>0$ starting from} $S$ if $v\in I^k_{\mathcal{C}}(S)\setminus I^{k-1}_{\mathcal{C}}(S)$, and that the vertices of $S$ have iteration time 0 starting from $S$.

\begin{lemma}
Let $n\geq 4$ be an integer.
In the complete graph $K_n$ and the cycle $C_n$, we have:
\begin{itemize}
\item $\ti_{\pc}(K_n)=1$ and $\ti_{\gc}(K_n)=\ti_{\mc}(K_n)=0$,
\item $\ti_{\pc}(C_n)=\ti_{\gc}(C_n)=\ti_{\mc}(C_n)=1$.
\end{itemize}
\end{lemma}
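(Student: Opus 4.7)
The plan is to handle each graph--convexity pair by direct inspection of the interval function; both $K_n$ and $C_n$ are simple enough that $\I_\mathcal{C}(\cdot)$ can be computed by hand, so the proof splits into six short computations.

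For the complete graph $K_n$, every pair of vertices is adjacent, so the only shortest path and the only induced path between two vertices of $K_n$ is the edge joining them. Hence $\I_{\gc}(S)=\I_{\mc}(S)=S$ for every $S\subseteq V(K_n)$, giving $\ti_{\gc}(K_n)=\ti_{\mc}(K_n)=0$. For the $P_3$ convexity, any three distinct vertices of $K_n$ form a $P_3$, so $\I_{\pc}(S)=V(K_n)$ whenever $|S|\geq 2$. Since $V(K_n)$ is always convex, one application suffices; and since $n\geq 4$ any $2$-subset grows strictly, we obtain $\ti_{\pc}(K_n)=1$.

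For $C_n$, the lower bound in each of the three convexities is obtained by taking $S=\{v_0,v_2\}$: the vertex $v_1$ lies on a geodesic, on an induced path, and on the $P_3$ joining $v_0$ and $v_2$, so $v_1\in \I_\mathcal{C}(S)\setminus S$ and $\ti_\mathcal{C}(C_n)\geq 1$. For the matching upper bound I would argue in each convexity that one application of $\I_\mathcal{C}$ already yields a convex set. In the $P_3$ convexity, a subset $T\subseteq V(C_n)$ is convex iff $V(C_n)\setminus T$ has no maximal arc of length exactly $1$, and one pass of $\I_{\pc}$ fills every length-$1$ gap of $V(C_n)\setminus S$ without altering any longer gap. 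In the monophonic convexity, every pair of non-adjacent vertices $u,v$ admits the two arcs of $C_n$ from $u$ to $v$ as induced paths, so $\I_{\mc}(\{u,v\})=V(C_n)$; consequently $\I_{\mc}(S)$ equals either $S$ itself (when $S$ is an edge or smaller, the only pairwise-adjacent possibilities in the triangle-free $C_n$) or $V(C_n)$, and is convex in either case.

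For the geodesic convexity on $C_n$, I would list the $S$-vertices cyclically and examine the arc lengths $a_1\geq a_2\geq\cdots\geq a_k$ between consecutive $S$-vertices, with $\sum_i a_i=n$. If $a_1<n/2$, every consecutive pair contributes a unique geodesic filling its arc, giving $\I_{\gc}(S)=V(C_n)$; if $a_1=n/2$ (possible only for $n$ even), the endpoints of the $a_1$-arc are antipodal and the two geodesics between them cover $V(C_n)$; and if $a_1>n/2$, the unique geodesic between the endpoints of the $a_1$-arc traverses every smaller arc, making $\I_{\gc}(S)$ the complementary subarc, which is convex because any two vertices inside an arc of length less than $n/2$ have their unique geodesic lying inside that arc. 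This last subcase is the only mildly delicate point; the remaining cases are direct applications of the definitions.
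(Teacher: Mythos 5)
Your proof is correct and follows essentially the same route as the paper's: a direct, case-by-case computation of the interval function for each graph--convexity pair ($K_n$ trivial for geodesic/monophonic, one step for $P_3$; $C_n$ handled by filling arcs). If anything, your treatment of the geodesic convexity on $C_n$ (the case analysis on the largest arc length relative to $n/2$) and of convexity after one step in the $P_3$ case is more explicit than the paper's rather terse argument for those points.
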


\begin{proof}
Let $S$ be a proper subset of at least two vertices of $K_n$.
In the $P_3$ convexity, $\I_{\pc}(S)$ contains all vertices of $K_n$, since every vertex in $V(K_n)\setminus S$ has two neighbors in $S$. Thus $\ti_{\pc}(K_n)=1$.
In the monophonic convexity, $\I_{\mc}(S)=\conv_{\mc}(S)=S$, since no vertex is an induced path between two vertices of $K_n$. Thus $\ti_{\mc}(K_n)=0$. Consequently, in the geodesic convexity, $\ti_{\gc}(K_n)=0$.

Now let $S$ be a proper subset of at least two non-adjacent vertices of $C_n$.
In the $P_3$ convexity, $v\in\I_{\pc}(S)$ if $v\in S$ or its two neighbors are in $S$. Then $\I_{\pc}(S)=\conv_{\pc}(S)$. Therefore $\ti_{\pc}(C_n)=1$.
In the monophonic convexity, $\I_{\mc}(S)$ contains all vertices of $C_n$, since every vertex is in an induced path between two non-adjacent vertices of $C_n$. Thus, $\ti_{\mc}(C_n)=1$.
Moreover, in the geodesic convexity, $\I_{\gc}(S)$ contains all vertices if and only if $S$ contains three vertices $x,y,z$ such that the distance between $x$ and $y$ is smaller than them sum of the distances between $x$ and $z$, and between $z$ and $y$. In this case, $\ti_{\gc}(S)=\ti_{\gc}(\{x,y,z\})=1$. Otherwise, $\I_{\gc}(S)=\I_{\gc}(\{x,y\})$, where $x$ and $y$ are the vertices of $S$ with maximum distance.
Therefore $\ti_{\gc}(C_n)=1$.

\end{proof}

\begin{lemma}
Let $T$ be a tree with at least three vertices.
Then $\ti_{\gc}(T)=\ti_{\mc}(T)=1$.
Moreover $\ti_{\pc}(T)=k\geq 1$ if and only if $T$ has a path $v_1,v_2,\ldots,v_k$ such that $v_i$ has degree at least 3 for every $1\leq i<k$ and $v_k$ has degree at least two.
\end{lemma}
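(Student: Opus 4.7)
The geodesic and monophonic statements follow from the standard fact that in a tree every pair of distinct vertices is joined by a unique path, which is simultaneously the only shortest path and the only induced path between them. Consequently, for every $S\subseteq V(T)$, both $\I_{\gc}(S)$ and $\I_{\mc}(S)$ equal the vertex set of the minimal subtree of $T$ containing $S$, which is itself closed under both kinds of paths and hence geodesically and monophonically convex. Therefore $\ti_{\gc}(T),\ti_{\mc}(T)\le 1$. Since $|V(T)|\ge 3$, the tree $T$ has diameter at least $2$, so choosing two non-adjacent vertices as $S$ yields an internal vertex witnessing $\I_{\gc}(S)\supsetneq S$, and equality with $1$ holds.

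For the $P_3$ characterization, I will prove that $\ti_{\pc}(T)\ge k$ if and only if $T$ admits such a path on $k$ vertices. For the direction $(\Rightarrow)$, pick $S\subseteq V(T)$ attaining the bound and a vertex $u_k\in \I^k_{\pc}(S)\setminus\I^{k-1}_{\pc}(S)$. Then $u_k$ has two neighbors in $\I^{k-1}_{\pc}(S)$, and at least one of them, call it $u_{k-1}$, has iteration time exactly $k-1$; otherwise both would already lie in $\I^{k-2}_{\pc}(S)$, forcing $u_k\in \I^{k-1}_{\pc}(S)$. Iterating backward produces a sequence $u_k,u_{k-1},\ldots,u_1$ with $u_ju_{j+1}\in E(T)$ and $u_j$ of iteration time exactly $j$. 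Since iteration times are unique, the $u_j$ are pairwise distinct, and in a tree this forces them to form a path. For $1\le j<k$, the two witnesses used for $u_j\in\I^j_{\pc}(S)$ lie in $\I^{j-1}_{\pc}(S)$, while $u_{j+1}\notin\I^j_{\pc}(S)\supseteq\I^{j-1}_{\pc}(S)$, so $u_{j+1}$ is distinct from both and $\deg_T(u_j)\ge 3$; and $\deg_T(u_k)\ge 2$ for the same witness reason. Setting $v_j:=u_j$ gives the required path.

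For the direction $(\Leftarrow)$, I would build a witness $S$ directly from the given path. The degree hypotheses let me fix two neighbors $a,b$ of $v_1$ outside the path, and for every $2\le j\le k$ a neighbor $c_j$ of $v_j$ outside the path. Because $T$ has no cycles, these $k+1$ vertices are pairwise distinct, all lie off the path, and no $v_j$ has any neighbor in $S=\{a,b,c_2,\ldots,c_k\}$ other than its designated witnesses ($a,b$ for $j=1$ and $c_j$ for $j\ge 2$), since any other coincidence would close a cycle through two path vertices. A straightforward induction on $j$ would then give $\I^j_{\pc}(S)\cap\{v_1,\ldots,v_k\}=\{v_1,\ldots,v_j\}$: at the inductive step $v_j$ joins via $v_{j-1}\in\I^{j-1}_{\pc}(S)$ and $c_j\in S$, while for $j'>j$ the only neighbor of $v_{j'}$ in $\I^{j-1}_{\pc}(S)$ is still $c_{j'}$. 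The main technical obstacle is precisely this last point: I must rule out that $P_3$-propagation inside a side subtree hanging off the path at some $v_j$ supplies a second neighbor of a later $v_{j'}$ prematurely. This is handled by observing that each such side subtree initially intersects $S$ in at most one vertex, so no $P_3$-propagation can even start inside it until $v_j$ itself enters the hull, confining the early growth of the hull to the path.
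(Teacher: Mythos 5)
Your proof is correct and follows essentially the same route as the paper: the tree-interval identity for the geodesic/monophonic cases, a backward-chaining argument from a latest-generated vertex for the forward $P_3$ direction, and the witness set of two off-path neighbors of $v_1$ plus one off-path neighbor of each later $v_i$ for the converse. The only difference is that you explicitly justify the step the paper merely asserts (that no side subtree hanging off the path can prematurely feed a second in-hull neighbor to a later $v_{j'}$), which is a welcome addition rather than a deviation.
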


\begin{proof}
In a tree, $\I_{\mc}(S)=\I_{\gc}(S)=\conv_{\mc}(S)=\conv_{\gc}(S)$ for every $S\subseteq V(T)$. Thus, since $T$ has two non-adjacent vertices, $\ti_{\gc}(T)=\ti_{\mc}(T)=1$.

Now consider the $P_3$ convexity on $T$.
Let $S\subseteq V(T)$ with $\ti(S)=k\geq 1$.
Then $S$ contains a path $v_1,v_2,\ldots,v_k$ such that $v_i$ has iteration time $i$, starting from $S$, for every $i\in[k]$.
Clearly, for any $1<i<k$, $v_i$ must have at least one neighbor other than $v_{i-1}$ and $v_{i+1}$, and consequently the degree of $v_i$ is at least 3.
Moreover, $v_1$ must have at least two neighbors other than $v_2$, and $v_n$ must have at least one neighbor other than $v_{k-1}$.
Therefore, $v_i$ has degree at least 3 for every $1\leq i<k$ and $v_k$ has degree at least two.

Finally suppose that $T$ has a path $v_1,v_2,\ldots,v_k$ such that $v_i$ has degree at least 3 for every $1\leq i<k$ and $v_k$ has degree at least two. Let $S\subseteq V(T)$ be such that $S\cap\{v_1,\ldots,v_k\}=\emptyset$ and $S$ contains exactly two neighbors of $v_1$ and exactly one neighbor of $v_i$ for every $1<i\leq k$.
Notice that $\ti_{\pc}(S)=k$, since $v_i$ has iteration time $i$ for every $i\in[k]$. Thus, $\ti_{\pc}(T)\geq k$ and we are done.

\end{proof}

As mentioned before, the general position number was investigated in the geodesic and $P_3$ convexities in many papers. The lemma below is about this parameter in the  monophonic convexity in simple graphs, such as the \emph{wheel graphs} $W_n$ for $n\geq 4$, obtained from the cycle $C_{n-1}$ by adding a universal vertex.

\begin{lemma}
Let $n\geq 4$ be an integer.
In the monophonic convexity, $\gp_{\mc}(C_n)=\gp_{\mc}(P_n)=2$ and $\gp_{\mc}(K_n)=n$. Moreover, $\gp_{\mc}(W_4)=4$ and $\gp_{\mc}(W_n)=3$ for $n\geq 5$.
\end{lemma}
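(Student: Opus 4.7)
The plan is to analyze each graph family separately, using the definition that $\I_{\mc}(\{x,y\})$ consists of the vertices lying on some induced $x$--$y$ path. A useful preliminary observation is that if $x$ and $y$ are adjacent in a graph, then the edge $xy$ is a chord of every $x$--$y$ path of length at least two, so the only induced $x$--$y$ path is the edge itself, and hence $\I_{\mc}(\{x,y\})=\{x,y\}$.

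For $K_n$ every pair is adjacent, so the observation immediately gives $\gp_{\mc}(K_n)=n$. Since $W_4$ is obtained from $C_3$ by adding a universal vertex, it is isomorphic to $K_4$, and $\gp_{\mc}(W_4)=4$ is immediate. For $P_n$ and $C_n$ with $n\geq 4$, I would show that no three vertices can be in general position: in $P_n$, one of any three vertices lies strictly between the other two on the (unique) induced path between them; in $C_n$, any three vertices contain a non-adjacent pair (no triangle exists since $n\geq 4$), and the two chordless arcs of the cycle between that pair are induced paths that together cover every cycle vertex, so the third vertex lies in the interval. Since any two vertices trivially form a general position set, this yields $\gp_{\mc}(P_n)=\gp_{\mc}(C_n)=2$.

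The main case is $W_n$ for $n\geq 5$; let $u$ be the universal vertex and $C=C_{n-1}$ the outer cycle. By the preliminary adjacency observation, $\I_{\mc}(\{u,y\})=\{u,y\}$ for every $y\in V(C)$, and $\I_{\mc}(\{x,y\})=\{x,y\}$ for $x,y\in V(C)$ adjacent on $C$. For $x,y\in V(C)$ non-adjacent on $C$, both chordless arcs of $C$ are induced $x$--$y$ paths and $x,u,y$ is another, so $\I_{\mc}(\{x,y\})=V(W_n)$. Three cycle vertices are therefore never in general position (some pair among them is non-adjacent on $C$ and the third lies in its interval), so every general position set contains at most two cycle vertices, giving $\gp_{\mc}(W_n)\leq 3$. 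The matching lower bound is witnessed by $\{u,x,y\}$ with $x,y$ adjacent on $C$, which lies in general position by the identities above. The main obstacle is simply the bookkeeping of the induced-path enumeration in $W_n$, which is substantially streamlined by the adjacency observation; without it, one would otherwise need to rule out longer paths through $u$ by hand, which is routine but tedious.
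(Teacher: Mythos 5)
Your proof is correct and follows essentially the same route as the paper: a direct case analysis of the monophonic intervals in each family, showing all of $V(K_n)$ is in general position, that any three vertices of $P_n$ or $C_n$ fail, and that in $W_n$ ($n\geq 5$) non-adjacent cycle pairs generate everything while $\{u,x,y\}$ with $x,y$ adjacent on the rim works. Your explicit preliminary observation that adjacent pairs have trivial monophonic intervals is only a minor expository improvement over the paper's implicit use of the same fact.
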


\begin{proof}
First consider the monophonic convexity.
In the complete graph $K_n$, no vertex is in an induced path between two vertices $x$ and $y$, other than $x$ and $y$. Then $V(K_n)$ is in general position. Thus $\gp_{\mc}(K_n)=n$.
In the cycle $C_n$, every vertex is an induced path between two non-adjacent vertices. Then $\gp_{\mc}(C_n)=2$.
In the path $P_n$, $\conv_{\mc}(S)=\I_{\mc}(S)=\I_{\mc}(\{x,y\})$, where $x$ and $y$ are the vertices of $S$ with maximum distance in $P_n$.
Then $\gp_{\mc}(P_n)=2$.

In wheel graphs, we have that $W_4=K_4$ and then $\gp_{\mc}(W_4)=4$. For $n\geq 5$, every vertex is in an induced path between two non-adjacent vertices of the main cycle of $W_n$. With this, a maximum general position set must have the universal vertex and two adjacent vertices of the main cycle and, then, $\gp_{\mc}(W_n)=3$ for $n\geq 5$. 
\end{proof}

\section{Iteration time is \NP-hard in the P3 convexity}

We first prove a lemma that will be useful in the main theorem of this section.

\begin{lemma}\label{lema2}
Let $G$ be a graph and $k$ be a positive integer.
If $S\subseteq V(G)$ and $\ti_{\pc}(S)\geq k$, then the subgraph $H$ of $G$ induced by $\conv_{\pc}(S)$ contains a path $v_1,\ldots,v_k$ of vertices not in $S$, where $v_1$ is adjacent to two vertices of $S$, the degree of $v_i$ in $H$ is at least 3 for every $1\leq i<k$ and the degree of $v_k$ in $H$ is at least 2.
\end{lemma}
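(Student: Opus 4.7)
The plan is to work backwards through the iteration levels of the $P_3$ convexity. Since $\ti_{\pc}(S)\geq k$, there must exist a vertex $v_k\in\I^k_{\pc}(S)\setminus\I^{k-1}_{\pc}(S)$, that is, a vertex whose iteration time starting from $S$ is exactly $k$; this will serve as the last vertex of the desired path.

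The core step is an inductive backward selection. Having fixed $v_{i+1}$ with iteration time exactly $i+1$, I would choose $v_i$ to be some neighbor of $v_{i+1}$ whose iteration time is exactly $i$. Existence is guaranteed by the definition of the $P_3$ interval function: $v_{i+1}$ entered $\I^{i+1}_{\pc}(S)$ because it has two neighbors in $\I^{i}_{\pc}(S)$, and if both of them already lay in $\I^{i-1}_{\pc}(S)$, then $v_{i+1}$ would itself belong to $\I^{i}_{\pc}(S)$, contradicting the choice of $v_{i+1}$. Iterating down to $i=1$, the vertices $v_1,\ldots,v_k$ have pairwise distinct iteration times and are therefore pairwise distinct, none of them lies in $S$ (all have positive iteration time), and consecutive vertices are adjacent by construction, so $v_1,\ldots,v_k$ is a path in $H=G[\conv_{\pc}(S)]$.

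It remains to verify the degree conditions in $H$. For $v_k$, the two ``witness'' neighbors that caused it to enter $\I^{k}_{\pc}(S)$ lie in $\I^{k-1}_{\pc}(S)\subseteq\conv_{\pc}(S)$, so $\deg_H(v_k)\geq 2$. For $1\leq i<k$, the two witnesses of $v_i$ lie in $\I^{i-1}_{\pc}(S)$, whereas $v_{i+1}$ is another neighbor of $v_i$ that belongs to $\I^{i+1}_{\pc}(S)\setminus\I^{i}_{\pc}(S)$ and is therefore distinct from both witnesses, yielding $\deg_H(v_i)\geq 3$. Applied to $i=1$, the two witnesses of $v_1$ sit in $\I^0_{\pc}(S)=S$, which gives the required adjacency of $v_1$ to two vertices of $S$. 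The only step that needs a moment of care is the backward selection, where one must rule out a ``premature'' drop in iteration time; since that is exactly the contrapositive argument indicated above, I do not foresee a genuine obstacle.
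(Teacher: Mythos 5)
Your proposal is correct and follows essentially the same backward-selection argument as the paper, only spelled out in more detail: the paper asserts directly that each vertex of iteration time $i+1$ has a neighbor of iteration time exactly $i$, whereas you justify this via the contrapositive (if both witnesses lay in $\I^{i-1}_{\pc}(S)$, the vertex would already be in $\I^{i}_{\pc}(S)$). The degree counting via the two witnesses plus the successor $v_{i+1}$ matches the paper's reasoning as well.
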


\begin{proof}
Let $H$ be the subgraph of $G$ induced by $\conv_{\pc}(S)$.
Since $\ti_{\pc}(S)\geq k$, then, starting from $S$, the vertices of $S$ have iteration time 0 and, for every $i\in[k]$, there is a vertex $v_i$ with iteration time $i$ adjacent to a vertex with iteration time $i-1$. Therefore, $H$ has a path $v_1,v_2,\ldots,v_k$ of vertices not in $S$ such that $v_i$ has iteration time $i$ for every $i\in[k]$, $v_1$ is adjacent to two vertices of $S$ and $v_i$ is adjacent to a vertex outside this path whose iteration time is at most $i-1$, and we are done.
\end{proof}

The following theorem contains the main result of this section, which is the basis of the proof of the main theorem of the next section.

\begin{theorem}\label{teo1}
Given a graph $G$ and a positive integer $k$, deciding whether the $P_3$ iteration time $\ti_{\pc}(G)$ is at least $k$ is an \NP-complete problem even in bipartite graphs.
\end{theorem}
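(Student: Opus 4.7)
The plan is to first establish \NP-membership, then focus the work on the hardness reduction. Membership is straightforward: a nondeterministic algorithm guesses $S\subseteq V(G)$ and then iterates $\I_{\pc}$ deterministically. Each application is polynomial and at most $|V(G)|$ applications reach a fixed point, so the number of nontrivial applications (which equals $\ti_{\pc}(S)$) is computable in polynomial time; the algorithm accepts iff this number is at least $k$.

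For hardness I would reduce from 3-SAT, possibly in a restricted form with a bounded number of occurrences per variable to help control degrees. Lemma \ref{lema2} is the structural guide: $\ti_{\pc}(S)\geq k$ forces the convex hull to contain a ``propagation path'' $v_1,\ldots,v_k$ of vertices outside $S$, activated one per round, with every internal $v_i$ of degree at least 3. This points to a graph $G$ built around a long designated \emph{spine} $v_1,\ldots,v_k$. The construction I would try has three kinds of pieces: (i) variable gadgets forcing any admissible $S$ to encode a truth assignment through the vertices it selects; (ii) clause gadgets attached to the spine so that $v_i$ receives its second activated neighbor in round $i$ exactly when the $i$-th clause is satisfied by the encoded assignment and $v_{i-1}$ has just been activated; and (iii) pendant vertices and edge subdivisions, used both to enforce the degree-3 condition internally along the spine and to make every cycle of $G$ even so that $G$ is bipartite. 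With $k$ equal to the number of clauses plus a small additive constant, the completeness direction follows by exhibiting the natural $S$ induced by a satisfying assignment; the soundness direction uses Lemma \ref{lema2} to force any $S$ with $\ti_{\pc}(S)\geq k$ to propagate all the way along the spine, which the gadget design then converts into a satisfying assignment.

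The hardest part will be \emph{rigidity}. Because a single $P_3$-round can activate many vertices simultaneously, the construction must rule out every shortcut: candidate sets $S$ that ignore the variable gadgets, pick inconsistent variable vertices, or exploit the interior of a clause gadget should all produce iteration time strictly less than $k$. Concurrently, maintaining bipartiteness constrains the parity of every subdivided path running between a gadget and the spine, while the internal degree-3 requirement of Lemma \ref{lema2} at each spine vertex couples these parity choices across the whole construction. Threading rigidity, degree, and bipartiteness constraints together at every gadget boundary is, I expect, where the bulk of the argument will live; once it is in place, completeness and soundness both reduce to a mechanical round-by-round simulation.
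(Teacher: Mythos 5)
Your membership argument is fine, and your overall plan --- reduce from 3-SAT, build the graph around a long spine that must coincide with the propagation path guaranteed by Lemma~\ref{lema2}, and let the spine advance one round per clause --- is exactly the skeleton of the paper's reduction (there the spine is $c'_1,c_1,c'_2,c_2,\ldots,c'_m,c_m$ with $k=2m$, each $c_i$ being activated by $c'_i$ together with a literal vertex chosen in $S$). But the proposal stops short of the one idea that actually makes the reduction sound, and you flag the hole yourself: \emph{why} should a set $S$ that selects two contradictory literal occurrences (or two literals of the same clause, or otherwise cheats) have iteration time strictly less than $k$? Nothing in a spine-plus-pendants-plus-subdivisions construction forces this; a priori an inconsistent $S$ propagates along the spine just as well as a consistent one, and then every 3-SAT instance would map to a yes-instance. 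Saying that such sets ``should all produce iteration time strictly less than $k$'' states the requirement without supplying a mechanism, and that mechanism is the entire content of the soundness direction.

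The paper's mechanism is a global \emph{collapse} apparatus: for every pair of contradictory literal occurrences $\ell_{i,a},\ell_{j,b}$ it adds two common neighbors $w_{i,j,a,b},w'_{i,j,a,b}$; for every clause it adds two vertices adjacent to all three of its literals; and two hub vertices $z,z'$ are made adjacent to all of these $w$-vertices. If the convex hull ever contains an inconsistent pair of literals with small iteration time, the corresponding $w$-vertices fire, then $z$ and $z'$, then all remaining $w$-vertices, then all literals, then all clause vertices --- everything within a constant ($\le 9$) number of rounds. Taking $m\ge 10$ and $k=2m\ge 20$ turns this into a contradiction, so any $S$ with $\ti_{\pc}(S)\ge k$ must encode a consistent assignment, and by Lemma~\ref{lema2} the spine is then the only possible propagation path of length $2m$. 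Some such quantitative ``cheating closes the hull in $O(1)$ rounds, but $k$ is large'' argument is indispensable, and it is precisely the part your proposal leaves open; as written it is a plausible plan rather than a proof. (A smaller point: the paper needs no separate variable gadgets or edge subdivisions --- consistency is enforced purely by the $w$-vertices, and the construction is bipartite by design.)
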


\begin{proof}
Let us prove that the $P_3$ iteration time problem is \NP-complete by showing a polynomial reduction from the problem \textbf{3-SAT}.
Given $m\geq 10$ clauses $\mathcal{C}=\{c_1,\ldots,c_m\}$ on variables $X=\{x_1, \ldots,x_n\}$ of an instance of \textbf{3-SAT}, let us denote the three literals of $c_i$ by $\ell_{i,1}$, $\ell_{i,2}$ and $\ell_{i,3}$.
Let $k=2m$ and let $G$ constructed as follows.

For each clause $c_i$ of $\mathcal{C}$, add to $G$ the vertices $c_i$, $c'_i$, $p'_i$, $\ell_{i,1}$, $\ell_{i,2}$ and $\ell_{i,3}$, and add the edges $p'_ic'_i$, $c'_ic_i$, $c_i\ell_{i,1}$, $c_i\ell_{i,2}$ and $c_i\ell_{i,3}$. Create a vertex $p''_1$ and add the edge $p''_1c'_1$.
Also add the edge $c_ic'_{i+1}$ for every $1\leq i<m$.
Moreover, for each pair of literals $\ell_{i,a}$ and $\ell_{j,b}$ such that one is the negation of the other, add the vertices $w_{i,j,a,b}$ and $w'_{i,j,a,b}$ adjacent to the vertices $\ell_{i,a}$ and $\ell_{j,b}$.
Also add the vertices $w_i$ and $w'_i$, and add the edges $w_i\ell_{i,p}$ and $w'_i\ell_{i,p}$ for every $i\in[m]$ and $p\in[3]$.
Let $W$ be the set of all vertices $w_{i,j,a,b}$, $w'_{i,j,a,b}$, $w_i$ and $w'_i$.
Finally, add vertices $z$ and $z'$ adjacent to all vertices in $W$.
Let $L$ be the set of all vertices $\ell_{i,a}$ and let $C$ be the set of all vertices $c_i$.
Notice that $G$ is bipartite.

We will prove that $\mathcal{C}$ is satisfiable if and only if $G$ contains a set $S\subseteq V(G)$ such that $\ti_{\pc}(G)\geq k=2m$.

Suppose that $\mathcal{C}$ has a truth assignment. For each clause $c_i$, let $a_i\in\{1,2,3\}$ such that $\ell_{i,a_i}$ is true for all $i\in[m]$.
Let $S = \{\ell_{i,a_i},\ p'_i : i\in[m]\}\cup\{p''_1\}$.
Notice that no vertex of $W$ is in $\conv(S)$, since $S$ was obtained from a truth assignment.
Moreover, starting from $S$, the iteration time of $c_i$ is $2i$ for every $i$.
Consequently, the iteration time of $G$ is at least $2m$.

Now, suppose that $G$ has a set $S$ with iteration time $\ti_{\pc}(S)=2m\geq 20$. Let $H$ be the subgraph induced by $\conv_{\pc}(S)$.
If $H$ contains two vertices $\ell_{i,a}$ and $\ell_{j,b}$ with iteration times at most 4 such that the literal $\ell_{i,a}$ is the negation of the literal $\ell_{j,b}$ (and vice-versa), then the iteration time of $w_{i,j,a,b}$ and  $w'_{i,j,a,b}$ is at most 5, the iteration time of $z$ and $z'$ is at most 6, the iteration time of the remaining vertices of $W$ is at most 7, the iteration time of the remaining vertices of $L$ is at most 8 and the iteration time of the vertices of $C$ is at most 9, a contradiction, since $2m\geq 20$.
Analogously, $H$ contains at most one vertex from $\ell_{i,1}$, $\ell_{i,2}$ and $\ell_{i,3}$, for every $i\in[m]$.
From the same argument, $H$ does not contain two vertices of $W$ with iteration times at most 5, nor $z$ and $z'$ with iteration times at most $6$.

With this, we conclude that $H$ contains at most one vertex of $W$, at most one vertex of $\{z,z'\}$ and at most one vertex of $\{\ell_{i,1},\ell_{i,2},\ell_{i,3}\}$ for every $i\in[m]$.
Therefore, if exactly one vertex of $\{z,z'\}$ belongs to $H$, its degree is at most 1 in $H$. If exactly one vertex of $W$ belongs to $H$, its degree is at most 2 in $H$.
Moreover, the degree of every $\ell_{i,a}$ in $H$ is at most 1, except at most one vertex with degree at most 2 in $H$. 

Then, from Lemma \ref{lema2}, the only possibility for iteration time $2m$ is the path $c'_1,c_1,c'_2,c_2,\ldots,c'_m,c_m$ of $G$ with iteration times $1,2,\ldots,2m$, respectively.
Since the iteration time of $c'_1$ is 1, we may assume that $p'_1,p''_1\in S$.
Then, for every vertex $c_i$, there must be at least one neighbor $\ell_{i,a}\in S$. Therefore, by assigning true to the literal $\ell_{i,a}$ for every vertex $\ell_{i,a}\in S$, we obtain a truth assignment, and we are done.
\end{proof}

\section{Iteration time is \NP-hard in the geodesic convexity}

The following lemma shows an important reduction from the $P_3$ convexity to the geodesic convexity, regarding the iteration time and the general position number.

\begin{lemma}\label{lema1}
Let $G$ be a graph and let $G_u$ be obtained from $G$ by adding a universal vertex $u$. 
If $G$ is a triangle free graph with at least 3 vertices, then $\ti_{\gc}(G_u)=\max\{\ti_{\pc}(G),1\}$ and $\gp_{\gc}(G_u)=\max\{\gp_{\pc}(G),\ \omega(G)+1\}$.
\end{lemma}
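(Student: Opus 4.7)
The plan is to exploit the fact that $G_u$ has diameter at most $2$ and that, by triangle-freeness of $G$, adjacent vertices of $G$ share no common neighbor. The key identity I would prove first is
\[
I_{\gc}^{G_u}(S)\cap V(G)\;=\;I_{\pc}^{G}(S\cap V(G))\qquad\text{for every }S\subseteq V(G_u),
\]
together with the fact that $u\in I_{\gc}^{G_u}(S)$ iff $u\in S$ or $S\cap V(G)$ contains two non-adjacent vertices. Indeed a vertex $z\in V(G)\setminus S$ lies strictly inside a geodesic of $G_u$ between $x,y\in S$ exactly when $z$ is a common neighbor of $x,y\in V(G)$ in $G$, which by triangle-freeness forces $x,y$ non-adjacent. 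Iterating this identity shows that once $u$ has been added (which happens as soon as the running set contains two non-adjacent vertices of $V(G)$), the evolution inside $V(G)$ is literally the $P_3$-iteration on $G$.

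For the iteration time, the lower bound is witnessed by taking $S\subseteq V(G)$ with $\ti_{\pc}^{G}(S)=\ti_{\pc}(G)$: when $\ti_{\pc}^{G}(S)\ge 1$, any vertex produced by the first $P_3$-step has two neighbors in $S$ that are non-adjacent (triangle-freeness), so $u$ joins at the first geodesic step and the two evolutions coincide, giving $\ti_{\gc}^{G_u}(S)=\ti_{\pc}^{G}(S)$; and any non-adjacent pair in $V(G)$ witnesses $\ti_{\gc}(G_u)\ge 1$. For the matching upper bound I split on whether $u\in S$. If $u\in S$, the identity yields $(I_{\gc}^{G_u})^k(S)=\{u\}\cup(I_{\pc}^{G})^k(S\setminus\{u\})$ for all $k\ge 0$, hence $\ti_{\gc}^{G_u}(S)=\ti_{\pc}^{G}(S\setminus\{u\})\le\ti_{\pc}(G)$. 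If $u\notin S$ and $S\cap V(G)$ has a non-adjacent pair, then $u$ enters after one step, reducing to the previous case and giving $\ti_{\gc}^{G_u}(S)\le\max\{\ti_{\pc}(G),1\}$. Finally, if $S$ has no non-adjacent pair then triangle-freeness makes $S$ an edge or smaller, so $S$ is already geodesically convex in $G_u$.

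For the general position formula, I again split on whether $u\in S$. If $u\in S$, then because $u$ is a common neighbor of every non-adjacent pair in $V(G)$, no two vertices of $S\setminus\{u\}$ may be non-adjacent in $G$, so $S\setminus\{u\}$ is a clique of $G$ and $|S|\le\omega(G)+1$; conversely $\{u\}$ together with a maximum clique is in general position since all pairwise distances are $1$. If $u\notin S$ then $S\subseteq V(G)$, and the interval identity gives $z\in I_{\gc}^{G_u}(\{x,y\})\cap V(G)$ iff $z\in I_{\pc}^{G}(\{x,y\})$, so $S$ is in geodesic general position in $G_u$ iff it is in $P_3$ general position in $G$, and the best such $S$ has size $\gp_{\pc}(G)$. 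Taking the maximum over the two cases yields the formula. The only real subtlety is the case analysis for the iteration time; everything else is an essentially automatic consequence of the interval identity.
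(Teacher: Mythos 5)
Your proposal is correct and follows essentially the same route as the paper's proof: both rest on the observations that geodesics of $G_u$ between non-adjacent vertices of $G$ are exactly the (induced, by triangle-freeness) $P_3$'s of $G$, that $u$ generates nothing and enters at step $1$ precisely when a non-adjacent pair is present, and both use the same $u\in S$ versus $u\notin S$ split for the general position count. Your version merely packages these facts as an explicit interval identity and carries out the case analysis for arbitrary $S\subseteq V(G_u)$ a bit more completely than the paper does.
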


\begin{proof}
Since $G_u$ has the universal vertex $u$, then the diameter of $G_u$ is two. 
Note that every shortest path in $G_u$ between two non-adjacent vertices of $G$ is a $P_3$, which is induced because $G$ is triangle free.

If $G$ has no $P_3$, then $\ti_{\pc}(G)=0$ and $\ti_{\gc}(G_u)=1$, where $u$ is the vertex of $G_u$ with iteration time 1 in the geodesic convexity.
So assume that $G$ has an induced $P_3$ and consequently $\ti_{\pc}(G)\geq 1$ and $\ti_{\gc}(G_u)\geq 1$. We prove that $\ti_{\gc}(G_u)=\ti_{\pc}(G)$.
Let $S\subseteq V(G)$ with iteration time $t\geq 1$ in the $P_3$ convexity.
Clearly $S$ contains two non-adjacent vertices and then $u$ has iteration time 1 in the geodesic convexity on $G_u$ starting from $S$.
Since $u$ is universal, no vertex of $G$ is in a shortest path between $u$ and other vertex. That is, $u$ does not help to generate other vertices.
With this, we conclude that the iteration time of every vertex of $G$ in the $P_3$ convexity on $G$ starting from $S$ is equal to the iteration time of the same vertex of $G$ in the geodesic convexity on $G_u$ starting from $S$.
Thus the iteration time of $S$ in the geodesic convexity on $G_u$ is equal to the iteration time of $S$ in the $P_3$ convexity on $G$.
From the other hand, as mentioned above, the iteration time of $S$ in the $P_3$ convexity on $G$ is equal to the iteration time of $S\cup\{u\}$ in the geodesic convexity on $G_u$, and we are done.

Now let us deal with the general position number.
Let $S$ be a $P_3$ general position set of $G$.
Clearly $S$ is also a geodesic general position set of $G_u$.
If $S$ contains two non-adjacent vertices, then $S\cup\{u\}$ is not a geodesic general position set of $G_u$; otherwise, $S\cup\{u\}$ is a clique and a geodesic general position set of $G_u$.
From the other hand, let $S'$ be a geodesic general position set of $G_u$. If $S'$ contains $u$, then $S'$ must be a clique and then $S'\setminus\{u\}$ is a clique of $G$. Otherwise, $S$ is also a $P_3$ general position set of $G$, and we are done. 

\end{proof}

We now prove the long-standing open question of \NP-hardness of the geodesic iteration time. Notice that the proof strongly depends on the \NP-hardness of the $P_3$ iteration time (Theorem \ref{teo1}).

\begin{theorem}
Given a graph $G$ and an integer $k$, deciding whether the geodesic iteration time $\ti_{\gc}(G)$ is at least $k$ is an \NP-complete problem even in graphs with diameter two.
\end{theorem}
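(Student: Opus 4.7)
The plan is to reduce from the $P_3$ iteration time problem on bipartite graphs, which is \NP-complete by Theorem~\ref{teo1}. Given an instance $(G, k)$ of that problem with $G$ bipartite, I would simply output the instance $(G_u, k)$, where $G_u$ is the graph obtained from $G$ by adding a universal vertex $u$. Since $G$ is bipartite it is triangle-free, and the graphs produced by the reduction in Theorem~\ref{teo1} have many vertices and contain an induced $P_3$, so the hypotheses of Lemma~\ref{lema1} are satisfied. Moreover, the universal vertex $u$ forces $\diam(G_u) \leq 2$, so the output lies in the restricted class of graphs claimed by the theorem.

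By Lemma~\ref{lema1}, $\ti_{\gc}(G_u) = \max\{\ti_{\pc}(G),\,1\}$. Because the reduction of Theorem~\ref{teo1} produces $k = 2m$ with $m \geq 10$, in particular $k \geq 2$, the ``$\max$'' does not affect the comparison and we obtain $\ti_{\gc}(G_u) \geq k$ if and only if $\ti_{\pc}(G) \geq k$. This establishes \NP-hardness; membership in \NP is routine, since a certificate is a set $S \subseteq V(G_u)$, and iteratively applying the geodesic interval function runs in polynomial time (the iteration time is trivially bounded by $|V(G_u)|$).

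I do not anticipate any substantive obstacle, as all the heavy machinery is already in place: Theorem~\ref{teo1} supplies the \NP-hardness for the $P_3$ convexity on bipartite graphs, and Lemma~\ref{lema1} transfers it to the geodesic convexity via a universal vertex. The only minor points to verify are that the instances produced by the reduction in Theorem~\ref{teo1} satisfy the mild hypotheses of Lemma~\ref{lema1} (bipartite and hence triangle-free, at least three vertices, and containing an induced $P_3$), all of which are immediate from that construction. In this way, a short chain of two reductions settles the long-standing complexity question for the geodesic iteration time, even under the diameter-two restriction.
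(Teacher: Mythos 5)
Your proposal is correct and follows essentially the same route as the paper: add a universal vertex, invoke Lemma~\ref{lema1} to get $\ti_{\gc}(G_u)=\max\{\ti_{\pc}(G),1\}$, and transfer the \NP-hardness from Theorem~\ref{teo1}. Your handling of the $\max$ via $k=2m\geq 2$ replaces the paper's small case analysis on graphs of diameter $0$ and $1$, and your explicit checks of the hypotheses of Lemma~\ref{lema1} and of \NP-membership are fine (the paper leaves these implicit).
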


\begin{proof}
Let $G$ be a graph.
Consider the graph $G_u$ obtained from the addition of a universal vertex $u$ in $G$.
From the reduction of Lemma \ref{lema1} from the iteration time in the $P_3$ convexity, we have that $\ti_{\gc}(G_u)=\max\{\ti_{\pc}(G),1\}$.
If the diameter of $G$ is 0, then $\ti_{\pc}(G)=0$.
If the diameter of $G$ is 1 and each vertex has degree at most 1, then $\ti_{\pc}(G)=0$.
If the diameter of $G$ is 1 and there is a vertex with degree at least two, then $\ti_{\pc}(G)=1$.
Since the iteration time in the $P_3$ convexity is \NP-hard from Theorem \ref{teo1}, we may assume that $G$ has diameter at least two and then $\ti_{\pc}(G)\geq 1$.
Therefore $\ti_{\gc}(G_u)=\ti_{\pc}(G)$ and consequently the iteration time in the geodesic convexity is also \NP-hard.
\end{proof}

\section{General position number is NP-hard in the monophonic convexity}

As mentioned previously, there are many recent papers regarding the general position number in the geodesic convexity. Moreover, the general position number in the $P_3$ convexity is equivalent to the dissociation number, which has also been extensively studied in
the literature. In this section, we obtain the first complexity results on the general position number in other well investigated graph convexity, the monophonic convexity, which was introduced by Jamison \cite{jamison82} in 1982. See also the papers \cite{faja,duchet88} from Farber and Jamison in 1986 and Duchet in 1988 with some of the first results on the monophonic convexity.

We first prove in Theorem \ref{teo-mono1} that deciding if a set with 3 vertices is in general position in the monophonic convexity is a $\coNP$-complete problem. Later we prove in Theorems \ref{teo-mono2} and \ref{teo-mono3} that deciding if there is a set with $k$ vertices in general position in the monophonic convexity is W[1]-hard when parameterized by the size $k$ of the solution and that the general position number $\gp_{\mc}(G)$ of the monophonic convexity is highly inapproximable.

\begin{theorem}\label{teo-mono1}
Given a graph $G$ and a set $S\subseteq V(G)$, determining whether $S$ is in general position in the monophonic convexity is a $\coNP$-complete problem, even if $|S|=3$.  
\end{theorem}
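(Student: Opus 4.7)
The plan is to show both membership in $\coNP$ and $\coNP$-hardness. Membership is straightforward: a certificate that $S$ is \emph{not} in general position consists of three distinct vertices $a,b,c\in S$ together with an induced $a$-$b$ path in $G$ that contains $c$; verifying that a given path is induced, has the correct endpoints, and passes through $c$ is polynomial. Hence the complementary problem lies in $\NP$.

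For $\coNP$-hardness, I would reduce from the NP-complete problem of deciding, given a graph $H$ and three distinct vertices $a,b,c\in V(H)$, whether $H$ contains an induced $a$-$b$ path through $c$; the NP-completeness of this problem is a standard consequence of Bienstock's work on detecting induced (chordless) cycles through two prescribed vertices. Given such an instance $(H,a,b,c)$, construct $G$ from $H$ by adding two new pendant vertices $a'$ and $b'$, where $a'$ is adjacent only to $a$ and $b'$ is adjacent only to $b$. Set $S=\{a',b',c\}$, so that $|S|=3$. The construction is clearly polynomial.

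The central equivalence to prove is that $S$ is in general position in $G$ (in the monophonic convexity) if and only if $H$ has no induced $a$-$b$ path through $c$. Since $a'$ and $b'$ have degree one in $G$, neither can be an internal vertex of any path, so $a'\notin \I_{\mc}(\{b',c\})$ and $b'\notin \I_{\mc}(\{a',c\})$ hold automatically. Thus the only remaining way for $S$ to fail general position is $c\in \I_{\mc}(\{a',b'\})$, namely that some induced $a'$-$b'$ path in $G$ contains $c$. Because $a'$ (resp.\ $b'$) has the unique neighbor $a$ (resp.\ $b$), any such $G$-path must begin with the edge $a'a$, end with the edge $bb'$, and have its interior lying in $H$; stripping off the two leaves yields an induced $a$-$b$ path of $H$ through $c$, and conversely any induced $a$-$b$ path of $H$ through $c$ extends to an induced $a'$-$b'$ path of $G$ through $c$ by prepending $a'$ and appending $b'$ (each new leaf is adjacent to exactly one vertex of the path, so inducedness is preserved).

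The main obstacle I anticipate is to ensure that the reduction is ``one-directional'': among the three possible witnesses against general position in a $3$-element $S$, only the intended witness (namely $c$ lying on an induced path between $a'$ and $b'$) may arise. The pendant trick resolves this cleanly for free, since a degree-one vertex can never be an interior point of any path. Once this observation is in place, the remainder of the argument is just the routine verification that attaching or removing a leaf preserves the ``induced path'' property.
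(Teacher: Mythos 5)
Your proof is correct, and it follows the same overall strategy as the paper's: $\coNP$ membership via the obvious path certificate, and hardness by reduction from the problem of deciding whether $H$ has an induced $a$-$b$ path through a third prescribed vertex $c$ (the paper cites Haas and Hoffmann, Theorem~10, for its $\NP$-completeness; your attribution to Bienstock is close in spirit, but Haas--Hoffmann is the precise reference). The difference lies in the gadget used to block the two unwanted witnesses. The paper keeps $S=\{x,y,z\}$ inside $V(H)$ and makes $x$ and $y$ simplicial by adding all edges inside $N_H(x)$ and inside $N_H(y)$; it must then argue both that a simplicial vertex cannot be an internal vertex of an induced path and that the added edges never occur as edges or chords of the witness path (because an induced path ending at $x$ contains only one neighbour of $x$). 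Your pendant gadget instead replaces $x$ and $y$ in $S$ by new degree-one vertices $a'$ and $b'$, which makes the first point immediate (a degree-one vertex is never internal to any path) and reduces the second to the trivial observation that a leaf is adjacent to exactly one vertex of the extended path. Both reductions are polynomial and equally effective; yours is slightly more elementary to verify, at the harmless cost that $S$ no longer consists of original vertices of $H$.
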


\begin{proof}
A certificate that $S$ is not in general position consists of three distinct vertices $x,y,z\in S$ and an induced path $P$ from $x$ to $y$ passing through $z$. The hardness proof is a reduction from the following problem: given a graph $H$ and three specified vertices $x,y,z\in V(H)$, decide whether there is an induced path from $x$ to $y$ in $H$ passing through $z$. This problem is proved to be $\NP$-complete in \cite{Haas2006} (Theorem 10). Let $G = H + (E_x\cup E_y)$, where $E_x$ and $E_y$ are sets of edges defined as follows: $E_x=\{ab \mid ab\notin E(H) \ \text{and} \ a,b\in N_H(x)\}$ and  $E_y=\{ab \mid ab\notin E(H) \ \text{and} \ a,b\in N_H(y)\}$. In other words, $x$ and $y$ are simplicial vertices of $G$. In addition, define $S=\{x,y,z\}$. We prove that there is an induced path from $x$ to $y$ in $H$ passing through $z$ if and only if $S$ is not in general position in $G$ with respect to the monophonic convexity.

Suppose first that $P=x,v_1,\ldots,v_k,y$ is an induced path in $H$ passing through $z$, with $z=v_j$ for some $j\in\{1,\ldots,k\}$. Note that $v_1$ is the only neighbor of $x$ in $V(P)$. Similarly, $v_k$ is the only neighbor of $y$ in $V(P)$. Thus, $E(P)\cap E_x=\emptyset$ and $E(P)\cap E_y=\emptyset$, and this implies that $P$ is also an induced path in $G$. But this means that $S$ is not in general position in $G$ with respect to the monophonic convexity. 

Conversely, suppose that $S$ is not in general position in $G$ with respect to the monophonic convexity. Thus, there is an induced path $P$ in $G$ between two distinct vertices $p,q$ of $S$ such that the third vertex $r$ of $S$, $r\notin\{p,q\}$, is an internal vertex of $P$. Since $x$ is a simplicial vertex of $G$, we have that $r\neq x$. Likewise, $r\neq y$. Thus, $r=z$ and $P$ is an induced path from $x$ to $y$ in $G$ passing through $z$. In addition, observe that $E(P)\subseteq E(H)$. Thus, $P$ is an induced path in $H$, and the theorem follows.
\end{proof}

\begin{theorem}\label{teo-mono2}
Given a graph $G$ and an integer $k>0$, deciding whether the general position number $\gp_{\mc}(G)$ in the monophonic convexity is at least $k$ is $\NP$-hard, even in graphs with diameter two.
\end{theorem}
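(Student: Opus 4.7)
The plan is to reduce from \textsc{Maximum Clique}, which is NP-hard. Given an instance $H$ on $n$ vertices, I construct in polynomial time a graph $G$ of diameter two such that $\gp_{\mc}(G)=\omega(H)+1$. The construction takes $V(G)=V(H)\cup\{u\}\cup W$, where $u$ is a universal vertex (adjacent to every other vertex of $G$) and $W=\{w_{xy}:xy\notin E(H),\, x\ne y\}$ contains one ``witness'' vertex per non-edge, with $N_G(w_{xy})=\{x,y,u\}$. The edges inside $V(H)$ coincide with $E(H)$, and $W$ is an independent set in $G$ apart from its edges to $V(H)\cup\{u\}$. Since $u$ is universal, $\diam(G)\le 2$, so the instance satisfies the theorem's hypothesis.

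For the lower bound $\gp_{\mc}(G)\ge\omega(H)+1$: for any clique $K$ of $H$, the set $K\cup\{u\}$ is a clique of $G$, hence a monophonic general position set, since cliques are trivially in general position with respect to any path convexity.

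For the matching upper bound, I would show that every monophonic general position set $S\subseteq V(G)$ with $|S|\ge 3$ satisfies $|S|\le\omega(H)+1$. The argument proceeds by case analysis driven by three canonical induced $P_3$'s in $G$: the path $x,u,y$ for each non-adjacent pair $x,y\in V(G)\setminus\{u\}$; the path $x,w_{xy},y$ for each non-edge $xy$ of $H$; and the paths $w_{xy},u,v$ and $w_{xy},u,w_{x'y'}$ whenever $v\in V(H)\setminus\{x,y\}$ or $w_{x'y'}$ is a second witness. The first family forces $u\in S$ to imply that $S\setminus\{u\}$ is a clique of $G$, immediately giving $|S|\le\omega(G)=\omega(H)+1$. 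The second family forbids $\{x,y,w_{xy}\}\subseteq S$ for any non-edge $xy$ of $H$. The third family, together with longer induced paths of the form $w_{xy},x,x',w_{x'y'}$ whenever $xx'\in E(H)$ or $w_{xy},x,w_{xx''},\ldots$ whenever $xx''\notin E(H)$, restricts the number of witnesses in $S$ and forces $S\cap V(H)$ to be a clique when $u\notin S$.

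The main obstacle is the case $u\notin S$ with several witnesses in $S$. Two witnesses $w_{xy}$ and $w_{x'y'}$ whose endpoint sets are disjoint are pairwise non-adjacent and share only $u$ as a length-two common neighbor; their other induced connections traverse $V(H)\cup W$, and one must show that at least one internal vertex of each such induced path must lie in $S$, or else forbid such a configuration via combinatorial bookkeeping on the non-edges of $H$. The delicate point is to ensure $|S\cap W|\le 1$ whenever $|S\cap V(H)|\ge 2$, so that $|S|\le\omega(H)+1$ in all sub-cases. Once this is carried out, the equivalence $\omega(H)\ge\ell \Leftrightarrow \gp_{\mc}(G)\ge\ell+1$ is established, and the polynomial-time nature of the reduction transfers NP-hardness (and, as exploited in Theorem~\ref{teo-mono3}, the $n^{1-\varepsilon}$-inapproximability) from \textsc{Maximum Clique} to $\gp_{\mc}$ on diameter-two graphs.
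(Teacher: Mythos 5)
Your reduction source (\textsc{Clique}) and the general strategy (attach a gadget, use an apex to force diameter two, aim for $\gp_{\mc}(G)=\omega(H)+1$) match the paper's in spirit, but your gadget does not work, and the step you yourself flag as ``the delicate point'' is not merely delicate --- it is false. Consider the witnesses attached at a common vertex $a$ of $H$, i.e.\ $S=\{w_{ab} : ab\notin E(H)\}$. I claim $S$ is in general position in your graph $G$. Since $u$ is universal, an induced path of $G$ can contain $u$ only as the middle vertex of a $P_3$ or as an endpoint of an edge; the only such path between two witnesses $w_{ab},w_{ac}$ through $u$ is $w_{ab},u,w_{ac}$, which contains no third witness. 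Any other induced path $P$ from $w_{ab}$ to $w_{ac}$ having $w_{ad}$ as an internal vertex must use two neighbors of $w_{ad}$ from $N_G(w_{ad})=\{a,d,u\}$; with $u$ excluded, both $a$ and $d$ would have to lie on $P$. But then $a$ is adjacent to at least three vertices of $P$ (the two endpoints $w_{ab},w_{ac}$ and the internal vertex $w_{ad}$), which is impossible in an induced path. Hence no witness of $S$ lies in the monophonic interval of two others, so $\gp_{\mc}(G)\geq n-1-\deg_H(a)$. Taking $H$ edgeless gives $\gp_{\mc}(G)\geq n-1$ while $\omega(H)+1=2$, so your claimed equality $\gp_{\mc}(G)=\omega(H)+1$ fails badly, and with it the correctness of the reduction. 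The underlying problem is that degree-$3$ witnesses are too ``isolated'': it is very hard for them to be interior vertices of induced paths, so large sets of them are automatically in general position.

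The paper avoids this by using a very different gadget: for each vertex $v_i$ of $H$ it adds a vertex $u_i$ adjacent to \emph{every} vertex of $H$ except $v_i$, plus one extra vertex $u$ adjacent to all the $u_i$ (not universal). These high-degree antipodal vertices make the monophonic intervals of bad pairs enormous --- e.g.\ $\I_{\mc}(\{v_i,v_j\})=V(G)$ whenever $v_iv_j\notin E(H)$, and $\I_{\mc}(\{v_i,u_i\})=V(G)$ --- which is exactly what rules out every general position set of size at least $3$ other than those corresponding to cliques of $H$ (plus one extra vertex). If you want to salvage your approach you would need to redesign $W$ so that a pair of witnesses, or a witness together with a vertex of $H$ it does not ``certify,'' already generates essentially the whole graph; as written, the upper-bound half of your equivalence cannot be carried out.
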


\begin{proof}
We obtain a polynomial reduction from the Clique problem, which has as an instance a graph $H$ and a positive integer $\ell$ and asks whether $H$ has a clique with size at least $\ell$.
Let $v_1,\ldots,v_n$ be the vertices of $H$ in which we may assume that $n\geq 3$ and $H$ does not have isolated vertices.
We build a graph $G$ from $H$, by adding for each vertex $v_i$ a new vertex $u_i$ adjacent to every vertex of $H$ except $v_i$.
Moreover, we include the new vertex $u$ adjacent to every vertex $u_1,\ldots,u_n$.
Notice that every induced path of $H$ is also an induced path in $G$.
Let $i,j,k\in[n]$ distinct.

First notice that $\I_{\mc}(\{v_i,u_i\})=V(G)$, since the vertices $u$ and $u_j$ belong to the induced path $v_i-u_j-u-u_i$ and, if $v_j$ is adjacent to $v_i$, then $v_j$ belongs to the induced path $v_i-v_j-u_i$, otherwise, $v_j$ belongs to the induced path $v_i-u_k-v_j-u_i$.

Now suppose that $v_i$ and $v_j$ are non-adjacent.
Thus $\I_{\mc}(\{u_i,u_j\})=V(G)$, since the vertex $u$ belongs to the induced path $u_i-u-u_j$, the vertex $v_k$ belongs to the induced path $u_i-v_k-u_j$ and the vertices $v_i$, $v_j$ and $u_k$ belong to the induced path $u_i-v_j-u_k-v_i-u_j$.
Let us prove that $\I_{\mc}(\{v_i,v_j\})=V(G)$.
If $v_k$ is adjacent to $v_i$ and $v_j$, then $v_k$ belongs to the induced path $v_i-v_k-v_j$.
If $v_k$ is adjacent to $v_i$, but not to $v_j$, then $v_k$ belongs to the induced path $v_i-v_k-u_i-v_j$.
If $v_k$ is adjacent to $v_j$, but not to $v_i$, then $v_k$ belongs to the induced path $v_i-u_j-v_k-v_j$.
If $v_k$ is non-adjacent to $v_i$ and $v_j$, then $v_k$ belongs to the induced path $v_i-u_j-v_k-u_i-v_j$.
Moreover, the vertex $u_k$ belongs to the induced path $v_i-u_k-v_j$, and the vertices $u$, $u_i$ e $u_j$ belong to the induced path $v_i-u_j-u-u_i-v_j$.

From this, let $S$ be a subset in general position on the monophonic convexity with at least 3 vertices. Therefore, $S$ must induce a clique in $H$ and cannot have two vertices $v_i$ and $u_i$ for the same $i$. Also, $S$ cannot have two vertices $u_i$ and $u_j$ such that $v_i$ and $v_j$ are non-adjacent. Moreover, $S$ cannot have three vertices $u_i$, $u_j$ and $v_k$ for distinct $i,j,k$, since $v_k$ belongs to the induced path $u_i-v_k-u_j$.

With this, we conclude that $S$ consists of (a) a clique of $G$ (formed by a clique of $H$ and a vertex $u_k$), or (b) a clique of $H$ and the vertex $u$, or (c) $S\subseteq\{u_1,\ldots,u_n\}$ is such that $\{v_i:\ u_i\in S\}$ induces a clique of $H$.
Therefore, $H$ has a clique of size $k$ if and only if $G$ has a subset in general position on the monophonic convexity with size $\ell=k+1$.
\end{proof}

\begin{corollary}\label{teo-mono3}
Let $G$ be a graph with diameter two and let $k$ be a positive integer.
The problem of deciding whether $G$ has a subset of size at least $k$ in general position on the monophonic convexity is $\W[1]$-hard when parameterized by the size $k$ of the solution. Moreover, there is no polynomial time algorithm with approximation factor $n^{1-\varepsilon}$ to compute the maximum size $\gp_{\mc}(G)$ of a subset of $G$ in general position on the monophonic convexity, for any $\varepsilon>0$, unless $\P=\NP$.
\end{corollary}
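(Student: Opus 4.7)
The plan is to repurpose directly the reduction built in the proof of Theorem~\ref{teo-mono2}. That construction takes an instance $(H,k)$ of \textsc{Clique} on $|V(H)|=n$ vertices (without isolated vertices, a harmless restriction) and produces in polynomial time a graph $G$ of diameter two on $N=|V(G)|=2n+1$ vertices satisfying $\omega(H)\ge k \Longleftrightarrow \gp_{\mc}(G)\ge k+1$.

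For the $\W[1]$-hardness, I would observe that $(H,k)\mapsto(G,k+1)$ is a parameterized reduction: it runs in polynomial (hence FPT) time, and the new parameter $k+1$ depends only on $k$. Since $k$-\textsc{Clique} is $\W[1]$-hard parameterized by the solution size, deciding whether $\gp_{\mc}(G)\ge k+1$ is $\W[1]$-hard parameterized by the solution size, already on graphs of diameter two.

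For the $n^{1-\varepsilon}$-inapproximability, I would chain the same reduction with Zuckerman's strong inapproximability of \textsc{Clique}: for every $\delta>0$ it is $\NP$-hard to distinguish graphs $H$ on $n$ vertices satisfying $\omega(H)\le n^{\delta}$ from those satisfying $\omega(H)\ge n^{1-\delta}$. Translating through the reduction, these two regimes become instances with $\gp_{\mc}(G)\le n^{\delta}+1$ versus $\gp_{\mc}(G)\ge n^{1-\delta}+1$, and the main technical step will be the gap bookkeeping: one must check that the additive ``$+1$'' and the linear blow-up $N=2n+1$ do not swamp the $n^{1-\delta}/n^{\delta}$ gap. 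A routine estimate shows that the resulting yes/no ratio for $\gp_{\mc}$ remains of order $N^{1-2\delta}$, so choosing $\delta<\varepsilon/3$ turns a hypothetical polynomial-time $N^{1-\varepsilon}$-approximation for $\gp_{\mc}$ into a polynomial-time decider for the hard \textsc{Clique} gap problem, contradicting $\P\ne\NP$.
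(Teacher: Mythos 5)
Your proposal is correct and follows essentially the same route as the paper: the authors likewise observe that the reduction of Theorem~\ref{teo-mono2} is simultaneously an $\FPT$ reduction and an approximation-preserving one, and then invoke the $\W[1]$-hardness and $n^{1-\varepsilon}$-inapproximability of \textsc{Clique}. Your extra gap-bookkeeping (handling the additive $+1$ and the linear size blow-up $N=2n+1$) is a detail the paper leaves implicit, and it is carried out correctly.
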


\begin{proof}
These results come directly from the reduction of the previous theorem, since it is also an $\FPT$ reduction which also preserves approximation and from the facts that \textbf{Clique} is \W[1]-hard \cite{Downey2012} and $n^{1-\varepsilon}$-inapproximable for any $\varepsilon>0$, unless $\P=\NP$ \cite{zuckerman06}.
\end{proof}

\section{General position number in the P3 convexity}

In this section, we focus on the general position number in the $P_3$ convexity.
Note that a subset $S\subseteq V(G)$ is in general position in the $P_3$ convexity if and only if $G[S]$ is a graph of maximum degree 1. Thus, it can be seen as a natural generalization of the notion of Independent Set, which induces a subgraph of maximum degree equal to zero. 
As one may expect, this has been studied in the literature under different names. Probably the most referred one is as a \emph{dissociation set}. The \emph{dissociation number} of $G$, denoted by $\diss(G)$, is the maximum cardinality of a dissociation set of $G$, which is then equivalent to $\gp_{\pc}(G)$. Another equivalent definition in the literature is the notion of $1$-dependent set. A \emph{$k$-dependent set} is a subset $S\subseteq V(G)$ such that $\Delta(G[S])\leq k$. The \emph{$k$-dependence number} of a graph $G$ is the cardinality of a maximum $k$-dependent set of $G$. For a survey, see~\cite{Chellali2012}. Another related notion is that of a \emph{3-path cover}. A subset $S\subseteq V(G)$ is a \emph{$k$-path cover} of $G$ if the vertex set of any path on at least $k$ vertices of $G$ is intersected by $S$. Note that $S$ is in general position in the $P_3$ convexity if and only if $V(G)\setminus S$ is a 3-path cover.

\paragraph{Related results.} In the sequel, we present results with respect to the previous notions, but we translate them to the context of the general position number in the $P_3$ convexity.
In~\cite{Yannakakis1981,boliac2004computing}, it is proved that computing $\gp_{\pc}(G)$ is $\NP$-hard even in bipartite graphs, but polynomial-time solvable in bipartite graphs with no induced ``skew star''. In~\cite{papadimitriou1982complexity}, it is proved that the same holds for planar graphs with maximum degree 4. In~\cite{cameron2006independent}, polynomial-time algorithms for chordal graphs, weakly
chordal graphs, asteroidal triple-free graphs and interval-filament graphs are presented. Orlovich et al.~\cite{Orlovichetal2011} proved that $\gp_{\pc}(G)$ is $\NP$-hard even in planar line graphs of a planar bipartite graph with maximum degree 4. They also present polynomial-time algorithms for restricted graph classes. Some other polynomial-time algorithms to particular graph classes can be found in~\cite{LOZIN2003167}.

In~\cite{TSUR20191}, it is presented an algorithm with running time $\mathcal{O}^*(1.713^k)$ to decide whether $G$ has a 3-path cover of cardinality at most $k$, which is the same as to ask whether $\gp_{\pc}(G)\geq n-k$. Thus, note that the dual problem of the general position number in the $P_3$ convexity is $\FPT$, parameterized by $k$.
In~\cite{HOSSEINIAN202295}, a $4/3$-approximation algorithm to the general position number in the $P_3$ convexity is obtained. Such result was further studied by~\cite{BOCK2022160}, which also presented other upper and lower bounds for $\gp_{\pc}(G)$.

In the following, we prove some results regarding the Parameterized Complexity of the $P_3$ general position number $\gc_{\pc}(G)$. First, none of the reductions cited above is a parameterized one. We did not find in the literature the study of the parameterized complexity of determining $\gp_{\pc}(G)\geq k$ parameterized by $k$. We start proving the $\W[1]$-hardness of deciding whether $\gc_{\pc}(G)\geq k$, parameterized by $k$. Let us define the problem we reduce to ours.

In the \textsc{Multicolored Independent Set} problem, the instance is a graph $G$ and a positive integer $k$. Each vertex of $G$ has a color in $\{1,\ldots, k\}$ and the goal is to find an independent set $S$ of $G$ with $k$ vertices, one of each color. Such independent set is called \emph{multicolored $k$-independent set}. This problem is well-known to be $\W[1]$-hard when parameterized by $k$~\cite{cygan2015parameterized}.

\begin{theorem}
\label{thm:w-hard-gppc}
 Deciding whether $\gp_{\pc}(G)\geq k$ parameterized by $k$ is $\W[1]$-hard.
\end{theorem}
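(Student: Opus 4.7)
The plan is to give an FPT-reduction from \textsc{Multicolored Independent Set}, which was just introduced as W[1]-hard parameterized by the number of colors $k$. Given an instance $(G,V_1,\ldots,V_k)$ where we may assume each color class $V_i$ is already an independent set of $G$ (any internal edge is useless for MCIS), I will produce a graph $G'$ such that $G$ has a multicolored independent set of size $k$ if and only if $\gp_{\pc}(G')\ge 2k$; since $2k=O(k)$ this is a valid FPT-reduction.

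The construction is short: introduce $k$ fresh vertices $d_1,\ldots,d_k$, turn each enlarged color class $V_i':=V_i\cup\{d_i\}$ into a clique by adding all missing internal edges, and keep every edge of $G$ joining distinct color classes. The single structural observation driving the proof is that $\gp_{\pc}$ coincides with the dissociation number $\diss$, so any general position set $S$ intersects every clique of $G'$ in at most two vertices; in particular $|S\cap V_i'|\le 2$ for all $i$, which immediately gives the global bound $|S|\le 2k$.

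For the easy direction, given a multicolored independent set $\{v_1,\ldots,v_k\}$ of $G$ with $v_i\in V_i$, I would take $S=\{v_i,d_i:i\in[k]\}$: the edges $v_id_i$ lie inside the cliques $V_i'$, while by the MCIS property no $G$-edge joins two distinct $v_i$'s, and $d_i$ has no $G$-edges at all. Hence $G'[S]$ is a matching of size exactly $k$, so $S$ is a dissociation set of size $2k$.

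The hard part will be the converse. Suppose $S$ is a dissociation set of $G'$ with $|S|\ge 2k$; the clique bound forces $|S\cap V_i'|=2$ for every $i$, and the two chosen vertices form an edge of $G'[S]$ that exhausts both of their degree budgets. Consequently, neither of those two vertices can be incident in $G'[S]$ to anything outside $V_i'$. Because $V_i'$ contains exactly one non-original vertex $d_i$, at least one of the two chosen vertices lies in $V_i$; call it $v_i$. The degree-saturation just observed rules out any $G$-edge from $v_i$ to a vertex of $V_j'\cap S$ with $j\ne i$, and in particular $v_iv_j\notin E(G)$ for all $j\ne i$; hence $\{v_1,\ldots,v_k\}$ is a multicolored independent set of $G$. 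The subtle point that the auxiliary vertex $d_i$ is designed to handle is precisely the case where both selected vertices of $V_i'$ happen to lie in $V_i$: in that ambiguity either one may serve as the representative $v_i$, and the argument still goes through because $d_i$ guarantees that the two-vertex slot of $V_i'$ cannot be ``wasted'' on non-original material, so an honest representative from $V_i$ is always available.
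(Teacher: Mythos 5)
Your proposal is correct and is essentially the paper's own reduction: the same construction (each color class plus one fresh apex vertex turned into a clique), the same forward direction, and the same clique-intersection bound for the converse. The only cosmetic difference is that the paper resolves the ``both selected vertices are original'' case by swapping one for the apex vertex, whereas you argue directly via degree saturation that either original vertex works as the representative; both arguments are valid.
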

\begin{proof}
    Let $(G,k)$ be an instance of \textsc{Multicolored Independent Set}. We build a graph $G'$ from $G$ in linear time such that $\gp_{\pc}(G')\geq 2k$ if and only if $G$ has a multicolored $k$-independent set.

Let $S_i$ be the vertices of $G$ with color $i$ for every $i\in\{1,\ldots, k\}$. To build $G'$ first convert each $S_i$ to a clique (often in the literature this is already a hypothesis in the given instance). Then, add $k$ vertices $u_i$, one for each $i\in\{1,\ldots, k\}$, and make $u_i$ adjacent to all vertices in $S_i$. This finishes the construction of $G'$. Clearly, this construction can be done in linear time. 

    Let us now prove that $\gp_{\pc}(G')\geq 2k$ if and only if $G$ has a multicolored $k$-independent set.

Suppose first that $S$ is a multicolored $k$-independent set of $G$. Let $s_i$ be the vertex in $S\cap S_i$. Define $S'$ as $S\cup \{u_i\mid i\in\{1,\ldots,k\}\}$. Note that, since $S$ is a multicolored $k$-independent set and the only neighbors of $u_i$ are the vertices in $S_i$, $G[S']$ is a graph whose edges are $u_is_i$ for $i\in\{1,\ldots, k\}$, and induce a matching. Thus $S'$ is a subset with $2k$ vertices in general position in the $P_3$ convexity in $G'$ and thus $\gp_{\pc}(G')\geq 2k$.

Let $S'$ be a subset of $V(G')$ in general position in the $P_3$ convexity with at least $2k$ elements. By definition, note that at most 2 vertices of $S'$ may lie in a same clique of $G'$. Since $V(G')$ can be partitioned into $k$ cliques $C_i=S_i\cup \{u_i\}$ for $i\in\{1,\ldots, k\}$, we deduce that $|S'| = 2k$ and that $|S'\cap C_i| = 2$ for each $i \in\{1,\ldots,k\}$.
In case there are two vertices $s_i,s_i' \in S'\cap S_i$ such that $s_i\neq u_i$ and $s_i'\neq u_i$, note that  $(S'\setminus \{s_i'\})\cup \{u_i\}$ is also in general position in the $P_3$ convexity, because $N(u_i) = S_i$. Consequently, we may assume, w.l.o.g., that $u_i\in S'$ for every $i\in\{1,\ldots, k\}$. Therefore, $S = S'\setminus \{u_i\mid i\in\{1,\ldots,k\}\}$ must be a multicolored $k$-independent set of $G$, as any edge linking two of these vertices would induce a path on 4 vertices in $G[S']$, which is not possible as $\Delta(G[S'])\leq 1$.
\end{proof}

After proving Theorem~\ref{thm:w-hard-gppc}, the natural question is answered by the following proposition.

\begin{proposition}
    Deciding whether $\gp_{\pc}(G)\geq k$ parameterized by $k$ is in $\XP$.
\end{proposition}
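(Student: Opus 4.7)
The plan is to exhibit a straightforward brute-force algorithm running in time $n^{O(k)}$. The starting observation, already recorded at the beginning of this section, is that a subset $S \subseteq V(G)$ is in general position in the $P_3$ convexity if and only if $\Delta(G[S]) \leq 1$, i.e.\ $G[S]$ is a disjoint union of isolated vertices and isolated edges. This property is hereditary with respect to taking subsets, so $\gp_{\pc}(G) \geq k$ if and only if $G$ admits some subset of size \emph{exactly} $k$ inducing a subgraph of maximum degree at most $1$; hence it suffices to search at that one cardinality.

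Given this reformulation, the algorithm I would describe simply enumerates every subset $S \subseteq V(G)$ of cardinality $k$, of which there are $\binom{n}{k}$, and verifies in $O(k^2)$ time whether $\Delta(G[S]) \leq 1$ by inspecting all pairs of vertices in $S$. The algorithm accepts as soon as it finds such an $S$, and rejects otherwise. Correctness is immediate from the equivalence above, and the total running time is $O\!\left(\binom{n}{k}\cdot k^2\right) = n^{O(k)}$, placing the problem in $\XP$.

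There is no substantive obstacle: the content of the proposition is precisely that the trivial enumeration already yields an $\XP$ algorithm, complementing Theorem~\ref{thm:w-hard-gppc} which rules out membership in $\FPT$ under the hypothesis $\FPT \neq \W[1]$. If one wished to present a marginally sharper bound, the enumeration could instead first guess the matching $M$ formed by the edges of $G[S]$ (at most $\lfloor k/2\rfloor$ edges, each picked from $E(G)$) and then enumerate the remaining $k - 2|M|$ isolated vertices of $G[S]$ among the vertices of $V(G)\setminus N[V(M)]$, but this still yields only an $n^{O(k)}$ bound and is not needed for the statement.
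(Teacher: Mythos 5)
Your proof is correct and follows essentially the same route as the paper's: enumerate all $\binom{n}{k}$ subsets of size exactly $k$ and check each one for the maximum-degree-at-most-one condition, giving an $n^{O(k)}$ algorithm. The only addition is that you make explicit the hereditarity argument justifying the restriction to cardinality exactly $k$, which the paper leaves implicit.
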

\begin{proof}
    Note that $\gp_{\pc}\geq k$ if and only if there is a subset $S\subseteq V(G)$ on exactly $k$ vertices in general position in the $P_3$ convexity. Thus, one may verify all the $\binom{n}{k}$ subsets and verify in polynomial time for each one whether it is in general position in the $P_3$ convexity.
\end{proof}

\paragraph{Structural Parameters.} Since we dealt with the parameterized complexity of determining whether $\gp_{\pc}(G)\geq k$ parameterized only by $k$, let us now study other parameters related to the structure of $G$.

A graph $G$ has \emph{neighborhood diversity} at most $d$ if $V(G)$ can be partitioned into $d$ sets of twins~\cite{cygan2015parameterized}. In particular, note that each part must correspond to a clique or an independent set. Moreover, for every pair of parts, either there are all edges linking vertices from one part to the vertices in the other, or none. Denote by $\nd(G)$ the minimum cardinality of such partition of $G$.

\begin{theorem}
\label{thm:polykernelnd}
    Deciding whether $\gp_{\pc}(G)\geq k$ has a kernel of size $\mathcal{O}(\nd(G)\cdot k)$ when parameterized by the neighborhood diversity $\nd(G)$ of $G$ plus $k$.
\end{theorem}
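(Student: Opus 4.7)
The plan is to exploit the twin structure of neighborhood diversity directly to design a simple kernelization rule. Fix a neighborhood diversity partition $V_1,\ldots,V_d$ of $G$ with $d=\nd(G)$, where each $V_i$ induces either a clique or an independent set and, for every $i\ne j$, either all edges or no edges exist between $V_i$ and $V_j$. The rule is: from each \emph{clique} part keep at most $2$ vertices, and from each \emph{independent} part keep at most $k$ vertices (which representatives are kept is irrelevant, since the vertices inside a part are pairwise twins). Calling the resulting induced subgraph $G'$, we have $|V(G')|\le d(k+2)\in\mathcal{O}(\nd(G)\cdot k)$.

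Correctness will rest on one simple ``twin-profile'' observation. For any $S\subseteq V(G)$, write $s_i:=|S\cap V_i|$ and let $e_{ij}\in\{0,1\}$ indicate whether all edges exist between $V_i$ and $V_j$ (with the convention $e_{ii}=1$ iff $V_i$ is a clique). Because the vertices of $V_i$ are twins, every vertex of $S\cap V_i$ has exactly the same degree in $G[S]$, namely
\[
d_i(S)\;=\;[V_i\text{ is a clique}]\cdot(s_i-1)\;+\;\sum_{j\ne i}e_{ij}\,s_j.
\]
Thus $S$ is a $P_3$ general position set if and only if $d_i(S)\le 1$ for every $i$ with $s_i\ge 1$; in particular $s_i\le 2$ whenever $V_i$ is a clique, which is exactly why trimming clique parts to $2$ vertices is safe.

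For the forward direction of the equivalence $\gp_{\pc}(G)\ge k\Leftrightarrow\gp_{\pc}(G')\ge k$, I would take any general position set $S\subseteq V(G)$ with $|S|\ge k$ and, using that subsets of general position sets are themselves general position sets, restrict to $S_0\subseteq S$ with $|S_0|=k$; set $t_i:=|S_0\cap V_i|$. Then $t_i\le k$ for every independent part and $t_i\le 2$ for every clique part, so $t_i\le|V_i\cap V(G')|$ for every $i$. Selecting any $t_i$ representatives inside each reduced part produces $S'\subseteq V(G')$ of size $k$ with the same profile $(t_i)_i$ as $S_0$; by the displayed formula applied to $S_0$ in $G$ and to $S'$ in $G'$ (both giving the same values), $S'$ is in general position in $G'$. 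The reverse direction is immediate, since $G'$ is an induced subgraph of $G$, so any general position set of $G'$ is a general position set of $G$ of the same size.

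The main technical obstacle is essentially the twin-exchangeability claim, namely that the degree of a vertex in $G[S]$ depends only on the profile $(s_i)_i$ and not on which representatives within each $V_i$ are chosen; once this is written down carefully, both directions of the equivalence and the kernel-size bound fall out with no further work.
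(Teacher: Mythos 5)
Your proof is correct and follows essentially the same approach as the paper: exploit the twin partition, keep at most two representatives per clique part and at most $k$ per independent part (the paper instead answers YES outright when an independent part has $\geq k$ vertices, a cosmetic variant), yielding an $\mathcal{O}(\nd(G)\cdot k)$ kernel. Your explicit twin-profile computation of the degrees in $G[S]$ is a more careful justification of the exchange step that the paper only sketches.
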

\begin{proof}
    Assume $k\geq 3$ as otherwise the problem can be solved in polynomial time.
    One can first obtain in polynomial time such partition of $V(G)$ into $\nd(G)$ sets of twins by just checking, for each pair of vertices, whether they are twins. Then, if one part corresponds to an independent set of cardinality at least $k$, return ``YES'', as such independent set is a set in general position in the $P_3$ convexity. While there is a part corresponding to a clique that has at least three vertices, then one may remove all but two vertices in this part as at most two of them can be part of any optimal solution and they are twins. Thus, we obtain an equivalent instance with at most $\nd(G)(k-1)$ vertices.
\end{proof}

Recall that a \emph{vertex cover} $S$ in a graph $G$ is a set of vertices such that each edge in $E(G)$ has at least one extremity in $S$. The cardinality of a minimum vertex cover in $G$ is denoted by $\vc(G)$. Determining whether $\vc(G)\leq k$ is one of Karp's 21 $\NP$-complete problems~\cite{Karp1972}, but admits a 2-approximation algorithm~\cite{cygan2015parameterized}. By definition, it is also well-known that $S$ is a vertex cover of $G$ if and only if $V(G)\setminus S$ is an independent set of $G$.

Note that if $G$ satisfies $\vc(G)\leq k$, then $\nd(G)\leq 2^k+k$. Indeed, if $S$ is a minimum vertex cover of $G$, one can partition the independent set $V(G)\setminus S$ into at most $2^k$ subsets of (false) twins (the ones with the same neighborhood in $S$) and complete the partition with the singletons of $S$. Thus, if one considers the problem of deciding whether $\gc_{\pc}\geq k$ parameterized by $\vc(G) +k$, then Theorem~\ref{thm:polykernelnd} implies an exponential kernel. We can easily improve such kernel. 

\begin{theorem}
\label{thm:linearkernelvc}
The problem of deciding whether $\gp_{\pc}(G)\geq k$ has a kernel of size $\mathcal{O}(\vc(G)+k)$ when parameterized by $\vc(G)+k$, where $\vc(G)$ is the size of a minimum vertex cover of $G$.
\end{theorem}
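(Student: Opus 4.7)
The plan is to exploit the fact that general position sets in the $P_3$-convexity are exactly the dissociation sets of $G$, so every independent set of $G$ is trivially in general position. Combined with the identity $\alpha(G)=n-\vc(G)$, this yields the key inequality
\[
\gp_{\pc}(G)\ \geq\ n-\vc(G).
\]
Hence the instance is automatically a YES whenever $n$ substantially exceeds $\vc(G)+k$, and is already small otherwise.

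More concretely, first I would compute in polynomial time a $2$-approximate vertex cover $S$ of $G$ (for instance, as the endpoints of a maximal matching), which satisfies $|S|\leq 2\,\vc(G)$. The set $V(G)\setminus S$ is then an independent, hence dissociation, set of cardinality $n-|S|$, giving $\gp_{\pc}(G)\geq n-|S|$. If $n-|S|\geq k$, I would output a fixed YES-instance of constant size (for example, the edgeless graph on $k$ vertices together with the parameter $k$, which satisfies $\gp_{\pc}=k$). Otherwise, $n<k+|S|\leq k+2\,\vc(G)$, and therefore the pair $(G,k)$ itself is already a kernel of size $\mathcal{O}(\vc(G)+k)$.

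The main point is conceptual rather than technical: once the bound $\gp_{\pc}(G)\geq n-\vc(G)$ is isolated, no further reduction rule of the twin-shrinking flavour used in Theorem~\ref{thm:polykernelnd} is needed. The only care required is to use a $2$-approximate vertex cover rather than an optimum one, since computing $\vc(G)$ exactly is $\NP$-hard; this only affects the constant in the kernel size bound, not its linear dependence on $\vc(G)+k$.
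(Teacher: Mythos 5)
Your proposal is correct and follows essentially the same route as the paper: compute a $2$-approximate vertex cover $S$, answer YES if the independent set $V(G)\setminus S$ (which is automatically in general position in the $P_3$ convexity) has at least $k$ vertices, and otherwise observe that $|V(G)|\leq 2\vc(G)+k-1$, so the instance itself is the kernel. Your additional remarks (the bound $\gp_{\pc}(G)\geq n-\vc(G)$ and the replacement by a constant-size YES-instance) only make explicit what the paper leaves implicit.
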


\begin{proof}
One can first use the 2-approximation algorithm for Vertex Cover~\cite{cygan2015parameterized} to obtain in polynomial time a vertex cover $S\subseteq V(G)$ such that $|S|\leq 2\cdot \vc(G)$. Then, if $|V(G)\setminus S|\geq k$, then return ``YES''. Otherwise, $|V(G)|\leq 2\cdot \vc(G)+k-1$.
\end{proof}

Finally, we obtain parameterized results from two algorithmic meta-theorems in graphs with bounded local-treewidth and graphs with bounded cliquewidth. Given a graph $G$, let \textsc{GP-Dec}$_{\pc}(k)$ be the problem of deciding whether $G$ has a $P_3$ general position set of size $k$.

The local-treewidth \cite{Eppstein00} of a graph $G$ is the function $ltw_G:\NN\to\NN$ which associates with any $r\in\NN$ the maximum treewidth of an $r$-neighborhood in $G$. That is, $ltw_G(r)=\max_{v\in V(G)}\{tw(G[N_r(v)])\}$, where $N_r(v)$ is the set of vertices at distance at most $r$ from $v$. We say that a graph class $\mathcal{C}$ has bounded local-treewidth if there is a function $f_\mathcal{C}:\NN\to\NN$ such that, for all $G\in\mathcal{C}$ and $r\in\NN$, $ltw_G(r)\leq f_\mathcal{C}(r)$.
It is known that graphs with bounded genus or bounded max degree have bounded local-treewidth \cite{Eppstein00}. In particular, a graph with max degree $\Delta$ has $ltw_G(r)\leq \Delta^r$ and a planar graph has $ltw_G(r)\leq 3r-1$ \cite{bodlaender98}.

In the following, we express the \textsc{GP-Dec}$_{\pc}(k)$ decision problem in First Order logic. We use lower case variables $x,y,z,\ldots$ (resp. upper case variables $X,Y,Z,\ldots$) to denote vertices (resp. subsets of vertices) of a graph. The \emph{atomic formulas} are $x=y$, $x\in X$ and $E(x,y)$ which denotes the adjacency relation in a given graph. The Boolean connectives are $\wedge,\vee,\neg,\to$ and $\leftrightarrow$ and the quantifiers are $\exists$ and $\forall$. Let MSOL be the monadic second order logic (with quantification over subsets of vertices) and let FO be the first order logic (with quantification over vertices).

\begin{theorem}\label{teo-fo}
Given a graph $G$ having bounded local-treewidth, the \textsc{GP-Dec}$_{\pc}(k)$ decision problem is FPT when parameterized by the size $k$ of the solution. More precisely, it can be solved in $\mathcal{O}(f(k) \cdot n^2)$ time.
\end{theorem}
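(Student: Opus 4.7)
The plan is to encode the $P_3$ general position problem as a first-order sentence $\phi_k$ of size depending only on $k$, and then appeal to Frick and Grohe's algorithmic meta-theorem for FO model checking on graph classes of bounded local-treewidth.

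First I would use the elementary fact that $z\in\I_{\pc}(\{x,y\})\setminus\{x,y\}$ if and only if $z$ is adjacent to both $x$ and $y$; equivalently, a set $S\subseteq V(G)$ is in general position in the $P_3$ convexity precisely when no vertex of $S$ has two neighbors inside $S$. Hence the property ``$G$ has a $P_3$ general position set of size $k$'' translates directly into the FO sentence
\[
\phi_k \;\equiv\; \exists x_1\cdots\exists x_k\Bigl(\bigwedge_{1\le i<j\le k} x_i\ne x_j \;\wedge\; \bigwedge_{\substack{i,j,\ell\in[k]\\|\{i,j,\ell\}|=3}}\neg\bigl(E(x_\ell,x_i)\wedge E(x_\ell,x_j)\bigr)\Bigr),
\]
which has size $\mathcal{O}(k^3)$ and whose length depends only on $k$.

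Next I would invoke the Frick--Grohe meta-theorem, which states that for any class of graphs of bounded local-treewidth and any FO sentence $\phi$, the model-checking problem ``$G\models\phi$?'' can be solved in time $f(|\phi|)\cdot n^{1+\varepsilon}$ for every $\varepsilon>0$, and in particular within $\mathcal{O}(f(|\phi|)\cdot n^{2})$. Instantiating with $\phi=\phi_k$ yields an $\mathcal{O}(f(k)\cdot n^{2})$-time algorithm for \textsc{GP-Dec}$_{\pc}(k)$ on every graph class of bounded local-treewidth, which is exactly the claimed FPT bound.

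I do not expect a real obstacle here: the only delicate point is to verify that $\phi_k$ is genuinely first-order, i.e., that its quantified objects are vertices and not vertex subsets. This is the case because the candidate general position set $S$ is represented by the $k$ vertex variables $x_1,\ldots,x_k$, so no second-order quantification is needed, and the heavier MSOL machinery (which would require additional hypotheses such as bounded treewidth or cliquewidth) is not invoked. The remainder of the argument is a direct citation of the Frick--Grohe theorem applied to the formula above.
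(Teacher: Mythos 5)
Your proof is correct and follows essentially the same route as the paper: both express ``there exist $k$ distinct vertices inducing a subgraph of maximum degree at most one'' as a first-order sentence whose length depends only on $k$ (the paper writes it with a set shorthand of size $\mathcal{O}(k^2)$, you expand the degree condition over all triples) and then apply the Frick--Grohe meta-theorem for FO model checking on classes of bounded local-treewidth. No gaps.
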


\begin{proof}
Consider the following first order formula \textsc{gp-Set}$_{\pc}(X)$, where $X\subseteq V(G)$, which is true if and only if $X$ is a $P_3$ general position set of $G$:
\[
\mbox{\textsc{gp-Set}}_{\pc}(X)\ :=\ \forall u,v,w\in X :\  \Big(\ \big(E(u,v) \wedge (u\neq w)\big) \rightarrow \neg E(v,w)\Big).
\]
Therefore the decision problem \textsc{GP-Dec}$_{\pc}(k)$ is FO expressible from the following formula of size $\mathcal{O}(k^2)$:
\[
\mbox{\textsc{GP-Dec}}_{\pc}(k) := \exists v_1,\ldots,v_k\in V(G):\ \mbox{\textsc{gp-Set}}_{\pc}(\{v_1,\ldots,v_k\})\wedge\bigwedge_{1\leq i<j\leq k} v_i\ne v_j
\]
Then, from the Frick-Grohe Theorem (see \cite{courcelle2}), \textsc{GP-Dec}$_{\pc}(k)$ is FPT with parameter $k$ in time $\mathcal{O}(n^2)$ for graphs with bounded local treewidth.
\end{proof}

Finally, we prove the following theorem on bounded cliquewidth graphs, such as cographs (cliquewidth 2), distance hereditary graphs (cliquewidth 3), $(q,q-4)$-graphs and bounded treewidth graphs.

\begin{theorem}
Deciding whether $\gp_{\pc}(G)\geq k$ is $\FPT$ parameterized by cliquewidth of $G$.
Moreover, the problem of finding a maximum $P_3$ general position set is polynomial time solvable in bounded cliquewidth graphs.
\end{theorem}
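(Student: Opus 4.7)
The plan is to express the property ``$S$ is a $P_3$ general position set'' as a short formula in $\mathrm{MSO}_1$ (monadic second-order logic with quantification over vertex sets) and then invoke the optimization meta-theorem of Courcelle, Makowsky and Rotics for bounded cliquewidth.

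First I would observe that $S\subseteq V(G)$ is in general position in the $P_3$ convexity if and only if $G[S]$ has maximum degree at most $1$, equivalently, no vertex of $S$ has two distinct neighbors in $S$. This is captured by the first-order (hence $\mathrm{MSO}_1$) formula with one free set variable $X$:
\[
\phi(X)\ :=\ \forall u,v,w :\ \bigl((u\in X)\wedge(v\in X)\wedge(w\in X)\wedge(v\neq w)\wedge E(u,v)\bigr)\ \rightarrow\ \neg E(u,w).
\]
The formula has constant size and does not depend on $G$.

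Next I would apply the Courcelle--Makowsky--Rotics theorem: for every fixed $\mathrm{MSO}_1$ formula $\phi(X)$ with a free set variable, the optimization problem of computing $\max\{|S| : G\models \phi(S)\}$ is solvable in time $f(\mathit{cw})\cdot n^{O(1)}$ provided a $\mathit{cw}$-expression of $G$ is given. Since exact cliquewidth is $\NP$-hard to compute, I would use an FPT-approximation of cliquewidth (for instance via the rank-width algorithm of Oum and Seymour, which produces in time $f(\mathit{cw})\cdot n^3$ a $k'$-expression with $k'\leq 2^{3\mathit{cw}+2}-1$) to obtain a usable expression within the required FPT budget. Feeding this expression and $\phi$ to the meta-theorem yields the maximum size of a $P_3$ general position set of $G$, and in particular decides whether $\gp_{\pc}(G)\geq k$, in time $g(\mathit{cw})\cdot n^{O(1)}$.

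The two statements of the theorem then follow at once: FPT parameterized by cliquewidth gives the decision version; and if the cliquewidth is bounded by a constant, the factor $g(\mathit{cw})$ collapses to a constant, yielding a polynomial-time algorithm for the maximization problem. There is no real technical obstacle here once the meta-theorem is invoked; the only point requiring care is verifying that the general position condition is an $\mathrm{MSO}_1$ (indeed first-order) property of $S$, which the formula $\phi(X)$ above makes explicit.
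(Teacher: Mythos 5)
Your proposal is correct and follows essentially the same route as the paper: express the $P_3$ general-position condition as a fixed first-order (hence MSO$_1$) formula with one free set variable and invoke the Courcelle--Makowsky--Rotics LinEMSOL meta-theorem for graphs of bounded cliquewidth, after computing a cliquewidth expression. You are in fact slightly more explicit than the paper about how that expression is obtained (via an FPT approximation such as Oum--Seymour), which is the intended reading of the paper's claim that an expression can be found in cubic time for bounded-cliquewidth graphs.
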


\begin{proof}
In graphs, an optimization problem is LinEMSOL if it wants to maximize (or minimize) some linear function over the sizes of $\ell$  subsets (of vertices), which satisfy an MSOL formula, for fixed $\ell$.
In \cite{courcelle1}, it was proved that LinEMSOL optimization problems are polynomial time solvable in graphs with bounded cliquewidth. The running time is linear if a cliquewidth expression is given. Moreover, a cliquewidth expression can be obtained in cubic time in bounded cliquewidth graphs.

Since the $P_3$ general position optimization problem wants to obtain the maximum subset $X\subseteq V(G)$ satisfying the formula \textsc{GP-Dec}$_{\pc}(k)$ described previously, which is a FO and an MSOL formula, then the maximization problem is polynomial time solvable in graphs with bounded cliquewidth.

Moreover, given a positive integer $k$, it is possible to decide if $\gp_{\pc}(G)\geq k$ in cubic time for bounded cliquewidth graphs, just running the cubic time algorithm to obtain a cliquewidth expression and running the linear time algorithm for the optimization problem and checking if $k$ is at most the maximum value. This implies that deciding whether $\gp_{\pc}(G)\geq k$ is $\FPT$ parameterized by the cliquewidth of $G$.
\end{proof}

As a consequence, we have the following corollary.

\begin{corollary}
The problem of obtaining a maximum general position set in the $P_3$ convexity is linear time solvable in distance-hereditary graphs.
\end{corollary}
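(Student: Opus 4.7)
The plan is to combine the previous theorem on bounded cliquewidth graphs with two well-known structural facts about distance-hereditary graphs. First, I would recall that distance-hereditary graphs have cliquewidth at most $3$ (Golumbic and Rotics). Second, and crucially for the linear running time, I would invoke the fact that a cliquewidth-$3$ expression for a distance-hereditary graph can be constructed in \emph{linear} time, rather than the cubic time guaranteed in general (this follows from the linear-time construction of a pruning sequence / one-vertex extension ordering for distance-hereditary graphs, which translates directly into a $3$-expression).

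Once these two ingredients are in place, the corollary is immediate from the previous theorem: we first compute a $3$-expression of the input distance-hereditary graph $G$ in linear time, and then apply the linear-time LinEMSOL algorithm of \cite{courcelle1} to the MSOL formulation of the $P_3$ general position problem given in the previous theorem's proof, which maximizes $|X|$ subject to \textsc{gp-Set}$_{\pc}(X)$. This yields a maximum $P_3$ general position set in linear total time.

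The main (minor) obstacle is not the logic part, which is inherited directly from the previous theorem, but rather making sure that the cliquewidth expression is produced in linear time rather than in the generic cubic time; this is why the corollary singles out distance-hereditary graphs (cliquewidth $3$) and not arbitrary bounded cliquewidth classes. No additional combinatorial argument specific to the $P_3$ convexity is needed beyond what the previous theorem already provides.
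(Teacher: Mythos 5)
Your proposal is correct and follows the same route as the paper: obtain a cliquewidth expression for the distance-hereditary input graph in linear time (Golumbic--Rotics) and then apply the linear-time LinEMSOL algorithm for bounded-cliquewidth graphs to the MSOL formula \textsc{gp-Set}$_{\pc}(X)$ from the preceding theorem. The paper's own proof is exactly this two-step argument, so no further comparison is needed.
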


\begin{proof}
It is known that a cliquewidth expression of any distance hereditary graph can be obtained in
linear time \cite{golumbic00}. From \cite{courcelle2}, LinEMSOL optimization problems are linear time solvable in graphs with bounded cliquewidth, if a cliquewidth expression is given, and we are done.
\end{proof}

\section*{Acknowledgments}

The authors were partially supported by CNPq [305404/2020-2], [311070/2022-1] and [404479/2023-5], CAPES [88881.197438/2018-01] and [88881.712024/2022-01],  FUNCAP [186-155.01.00/2021] and FAPERJ [211.753/2021].

\bibliographystyle{plain}

\end{document}